\documentclass[12pt]{article}

\usepackage{amsthm}
\usepackage{amssymb}
\usepackage{amsmath}
\usepackage{amsfonts}
\usepackage{times}
\usepackage{bbm}
\usepackage{cite}
\usepackage{color, soul}
\usepackage{hyperref}
\hypersetup{
     colorlinks   = true,
     citecolor    = black,
     linkcolor    = blue
}

\usepackage[paper=a4paper, top=2.5cm, bottom=2.5cm, left=2.5cm, right=2.5cm]{geometry}

\def\qed{\hfill $\vcenter{\hrule height .3mm
\hbox {\vrule width .3mm height 2.1mm \kern 2mm \vrule width .3mm
height 2.1mm} \hrule height .3mm}$ \bigskip}

\def \RR {\mathbb R}

\def \EE {\mathbb E}

\def \eps {\varepsilon}

\def \FF {\mathcal{F}}

\newcommand\norm[1]{\left\lVert#1\right\rVert}
\newtheorem{theorem}{Theorem}
\newtheorem{lemma}{Lemma}

\newtheorem{proposition}{Proposition}
\newtheorem{corollary}[theorem]{Corollary}
\theoremstyle{definition}

\theoremstyle{remark}
\newtheorem{remark}[theorem]{Remark}

\title{Stability of the Shannon-Stam inequality via the F\"ollmer process}

\author{Ronen Eldan\thanks{Weizmann Institute of Science. Incumbent of the Elaine Blond career development chair. Supported by a European Research Council Starting Grant (ERC StG) and by an Israel Science Foundation grant no. 715/16.} \and Dan Mikulincer\thanks{Weizmann Institute of Science. Supported by an Azrieli foundation fellowship.}}
 
\begin{document}

\maketitle

\begin{abstract}
	We prove stability estimates for the Shannon-Stam inequality (also known as the entropy-power inequality) for log-concave random vectors in terms of entropy and transportation distance. In particular, we give the first stability estimate for general log-concave random vectors in the following form: for log-concave random vectors $X,Y \in \RR^d$, the deficit in the Shannon-Stam inequality is bounded from below by the expression
	$$
	C \left(\mathrm{D}\left(X||G\right) + \mathrm{D}\left(Y||G\right)\right),
	$$
	where $\mathrm{D}\left( \cdot ~ ||G\right)$ denotes the relative entropy with respect to the standard Gaussian and the constant $C$ depends only on the covariance structures and the spectral gaps of $X$ and $Y$. In the case of uniformly log-concave vectors our analysis gives dimension-free bounds. Our proofs are based on a new approach which uses an entropy-minimizing process from stochastic control theory.
\end{abstract}

	\section{Introduction}
	Let $\mu$ be a probability measure on $\RR^d$ and $X \sim \mu$. Denote by $\mathrm{h}(\mu)$, the differential entropy of $\mu$ which is defined to be
	$$\mathrm{h}(\mu): = \mathrm{h}(X) = -\int\limits_{\RR^d}\ln\left(\frac{d\mu}{dx}\right)d\mu.$$ 
	One of the fundamental results of information theory is the celebrated Shannon-Stam inequality which asserts that for independent vectors $X$, $Y$ and $\lambda \in (0,1)$
	\begin{equation} \label{eq: shannon-stam}
	 \mathrm{h}\left(\sqrt{\lambda}X + \sqrt{1-\lambda}Y\right) \geq \lambda \mathrm{h}(X) + (1-\lambda)\mathrm{h}(Y).
	\end{equation}
	We remark that Stam \cite{stam1959some} actually proved the equivalent statement
	\begin{equation} \label{eq: EPI}
		e^{\frac{2\mathrm{h}(X+Y)}{d}} \geq e^{\frac{2\mathrm{h}(X)}{d}} + e^{\frac{2\mathrm{h}(Y)}{d}},
	\end{equation}
	first observed by Shannon in \cite{shannon1948mathematical}, and known today as the entropy power inequality.
	To state yet another equivalent form of the inequality, for any positive-definite matrix, $\Sigma$, we set $\gamma_\Sigma$ as the  centered Gaussian measure on $\RR^d$ with density
	$$\frac{d\gamma_{\Sigma}(x)}{dx} = \frac{e^{-\frac{\langle x, \Sigma^{-1}x\rangle}{2}}}{\sqrt{\det(2\pi\Sigma)}}.$$
	For the case where the covariance matrix is the identity, $\mathrm{I}_d$, we will also write $\gamma := \gamma_{\mathrm{I}_d}$.
	If $Y \sim \nu$ we set the relative entropy of $X$ with respect to $Y$ as
	$$\mathrm{D}(\mu||\nu): = \mathrm{D}(X||Y) = \int\limits_{\RR^d}\ln \left(\frac{d\mu}{d\nu}\right)d\mu.$$
	For $G \sim \gamma$, the differential entropy is related to the relative entropy by
	\begin{align*}
	\mathrm{D}(X ||G) &= -\mathrm{h}(X) - \frac{1}{2}\EE\left[\norm{X}_2^2\right]+\frac{d}{2}\ln(2\pi).
	\end{align*}
	Thus, when $X$ and $Y$ are independent and centered the statement 
\begin{equation} \label{eq: relative shannon-stam}
	\mathrm{D}\left(\sqrt{\lambda}X + \sqrt{1-\lambda}Y\big|\big|G\right) \leq \lambda \mathrm{D}(X||G) + (1 -\lambda)\mathrm{D}(Y||G),
\end{equation}
	is equivalent to \eqref{eq: shannon-stam}.
	Shannon noted that in the case that $X$ and $Y$ are Gaussians with proportional covariance matrices, both sides of \eqref{eq: EPI} are equal. Later, in \cite{stam1959some} it was shown that this is actually a necessary condition for the equality case. We define the deficit in \eqref{eq: relative shannon-stam} as 
	$$\delta_{EPI, \lambda}(\mu,\nu):=\delta_{EPI, \lambda}(X,Y)=  \Bigl (\lambda \mathrm{D}(X||G) + (1 -\lambda)\mathrm{D}(Y||G)\Bigr) - \mathrm{D}\left(\sqrt{\lambda}X + \sqrt{1-\lambda}Y\big|\big|G\right),$$
	and are led to the question: \emph{what can be said about $X$ and $Y$ when $\delta_{EPI, \lambda}(X,Y)$ is small?}
	One might expect that, in light of the equality cases, a small deficit in \eqref{eq: relative shannon-stam} should imply that $X$ and $Y$ are both close, in some sense, to a Gaussian. A recent line of works has focused on an attempt to make this intuition precise (see e.g., \cite{toscani2015strengthened, courtade2016wasserstein}), which is also our main goal in the present work. In particular, we give the first stability estimate in terms of relative entropy.
	A good starting point is the work of Courtade, Fathi and Pananjady (\cite{courtade2016wasserstein}) which considers stability in terms of the Wasserstein distance (also known as quadratic transportation). The Wasserstein distance is defined by
	$$\mathcal{W}_2(\mu,\nu) =\inf\limits_{\pi}\sqrt{\int\limits_{\RR^{2d}}\norm{x-y}_2^2d\pi(x,y)},$$
	where the infimum is taken over all couplings $\pi$ whose marginal laws are $\mu$ and $\nu$. A crucial observation made in their work is that without further assumptions on the measures $\mu$ and $\nu$, one should not expect meaningful stability results to hold. Indeed, for any $\lambda \in (0,1)$ they show that there exists a family of measures $\{\mu_\eps\}_{\eps > 0}$ such that  $\delta_{EPI, \lambda}(\mu_\eps,\mu_\eps) < \eps$ and such that for any Gaussian measure $\gamma_\Sigma$, $\mathcal{W}_2(\mu_\eps, \gamma_\Sigma) \geq \frac{1}{3}$. Moreover, one may take $\mu_\eps$ to be a mixture of Gaussians.	
	Thus, in order to derive quantitative bounds it is necessary to consider a more restricted class of measures. We focus on the class of log-concave measures which, as our method demonstrates, turns out to be natural in this context. 
	\subsection*{Our Contribution} \label{sec: results}
A measure is called log-concave  if it is supported on some subspace of $\RR^d$ and, relative to the Lebesgue measure of that subspace, it has a density $f$ for which
$$-\nabla^2 \ln(f(x)) \succeq 0 \text{ for all } x,$$
where $\nabla^2$ denotes the Hessian matrix, and we consider the inequality in the sense of positive definite matrices.
Our first result will rely on a slightly stronger condition known as \emph{uniform} log-concavity.
If there exists $\xi > 0$ such that
$$-\nabla^2 \ln(f(x)) \succeq {\xi}\mathrm{I}_d \text{ for all } x,$$
then we say that the measure is $\xi$-uniformly log-concave.
	\begin{theorem} \label{thm: stability for uniform}
		Let $X$ and $Y$ be $1$-uniformly log-concave centered vectors, and denote by $\sigma^2_X,\sigma^2_Y$ the respective minimal eigenvalues of their covariance matrices. Then there exist Gaussian vectors $G_X$ and $G_Y$ such that for any $\lambda \in (0,1)$,
			{\small\begin{align*}
			\delta_{EPI, \lambda}(X,Y) \geq
			\frac{\lambda(1-\lambda)}{2}\left(\sigma_X^4\mathrm{D}\left(X||G_X\right) + \sigma_Y^4\mathrm{D}\left(Y||G_Y\right)+ \frac{\sigma_X^4}{2}\mathrm{D}\left(G_X||G_Y\right) + \frac{\sigma_Y^4}{2}\mathrm{D}\left(G_Y||G_X\right)\right).
			\end{align*}}
	\end{theorem}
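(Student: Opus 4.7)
The proof uses the F\"ollmer stochastic representation of relative entropy. Recall that for a measure $\mu$ on $\RR^d$ of finite relative entropy with respect to $\gamma$, there exists an adapted drift $u_t$ driven by a standard Brownian motion $B_t$ such that $X_t := B_t + \int_0^t u_s ds$ satisfies $X_1 \sim \mu$ and
\[
\mathrm{D}(\mu\|\gamma) \;=\; \tfrac{1}{2}\,\EE\int_0^1 \|u_s\|^2 ds;
\]
this drift $u$ minimizes the right-hand side over all adapted drifts whose associated process reaches $\mu$ at time $1$.

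My plan is to run F\"ollmer processes $X_t = B_t + \int_0^t u_s ds$ and $Y_t = \tilde B_t + \int_0^t v_s ds$ for $X$ and $Y$ driven by \emph{independent} Brownian motions $B, \tilde B$. Setting $W_t := \sqrt\lambda B_t + \sqrt{1-\lambda}\,\tilde B_t$ (still a standard Brownian motion) and $Z_t := \sqrt\lambda X_t + \sqrt{1-\lambda}\,Y_t$, one has $Z_1 \sim \sqrt\lambda X + \sqrt{1-\lambda}\,Y$ and $Z_t = W_t + \int_0^t w_s ds$ with $w_s := \sqrt\lambda u_s + \sqrt{1-\lambda}\,v_s$. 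The crucial observation is that $w_s$ is \emph{not} adapted to the filtration of $W$; by the Bou\'e-Dupuis variational formula in the enlarged filtration generated by $(B,\tilde B)$,
\[
\mathrm{D}(Z_1\|\gamma) \;\leq\; \tfrac{1}{2}\,\EE\int_0^1 \|w_s\|^2 ds.
\]
Combining this with the F\"ollmer identities for $X$ and $Y$ and expanding $\|w_s\|^2 = \lambda\|u_s\|^2 + (1-\lambda)\|v_s\|^2 + 2\sqrt{\lambda(1-\lambda)}\langle u_s, v_s\rangle$ gives an initial lower bound on $\delta_{EPI,\lambda}(X,Y)$; the cross term reduces to $2\sqrt{\lambda(1-\lambda)}\langle \EE u_s,\EE v_s\rangle$ by the independence of $u$ and $v$, and can be handled for centered targets. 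Sharpening Bou\'e-Dupuis by identifying its gap as a quantitative ``projection defect'' (the $L^2$ distance of $w_s$ from its adapted counterpart with respect to $\cF^Z_s$) furnishes the nonnegative quantity that ultimately produces the lower bound.

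To recover the quantitative lower bound in the stated form, $1$-uniform log-concavity enters through the Brascamp-Lieb (Poincar\'e-type) inequality applied pointwise in $t$ to the conditional law of $X_1$ given $\cF^B_t$, which is itself uniformly log-concave. This supplies a decomposition $u_s = L_s X_s + r_s$ with $L_s$ deterministic and $r_s$ a nonlinear correction; the process driven by the linear piece $L_s X_s$ has a centered Gaussian endpoint which we take to be $G_X$. The prefactor $\sigma_X^4$ arises once the Brascamp-Lieb constant (controlled by $\sigma_X^2$) is combined with the covariance normalization. A de Bruijn-type identity along the F\"ollmer flow converts $\EE\int\|r_s\|^2 ds$ into $\mathrm{D}(X\|G_X)$ up to the required prefactor, while comparing the linear drifts $L_s$ (from $X$) and the analogous $M_s$ (from $Y$) inside the projection-defect term produces the Gaussian cross entropies $\mathrm{D}(G_X\|G_Y)$ and $\mathrm{D}(G_Y\|G_X)$ with the matching $\tfrac{1}{2}$ weights.

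The principal obstacle is the bookkeeping needed to simultaneously (i) extract the fourth-power prefactor $\sigma_X^4$, where a Brascamp-Lieb argument naturally yields only $\sigma_X^2$; and (ii) split the single projection-defect quantity into the non-Gaussianity contribution $\mathrm{D}(X\|G_X)$ and the covariance-mismatch contribution $\mathrm{D}(G_X\|G_Y)$ with matching constants. Controlling the cross term $\langle \EE u_s,\EE v_s\rangle$ for centered log-concave vectors, and verifying that the time integration preserves the pointwise orthogonality used in the drift decomposition, are the remaining delicate technical steps.
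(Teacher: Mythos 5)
Your proposal diverges from the paper's proof in an essential way, and the point where it diverges is precisely where it breaks down.

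The step that fails is the very first quantitative bound. If $u$ and $v$ are the F\"ollmer drifts for $X$ and $Y$ driven by independent Brownian motions, then both are martingales and (for centered targets) $\EE u_s = \EE v_s = 0$ for all $s$, so the cross term $2\sqrt{\lambda(1-\lambda)}\langle\EE u_s,\EE v_s\rangle$ vanishes identically. The Bou\'e--Dupuis/Lehec estimate on $\mathrm{D}(Z_1\|\gamma)$ using $w_s=\sqrt\lambda u_s+\sqrt{1-\lambda}v_s$ therefore reproduces the Shannon--Stam inequality \emph{exactly}, with lower bound $\delta_{EPI,\lambda}\ge 0$ and nothing left over. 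Everything you would hope to extract has to come from the gap in that variational inequality, which you describe only as a "projection defect" -- the $L^2$ distance of $w$ from its $\cF^Z$-adapted version. That object is not tractable (it involves the filtration of $Z$, which is not explicitly described), and there is no argument given for why it should decompose into $\mathrm{D}(X\|G_X)$, $\mathrm{D}(Y\|G_Y)$ and the Gaussian cross entropies with the stated prefactors. Similarly, the proposed decomposition $u_s=L_sX_s+r_s$ with $L_s$ deterministic does not make $\EE\int\|r_s\|^2\,ds$ a relative entropy: $r_s$ is not the F\"ollmer drift of $X$ relative to $G_X$, so no de Bruijn identity applies to it directly.

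The paper sidesteps all of this by abandoning the drift picture. It writes $X=\int_0^1\Gamma_t^X\,dB_t^X$ and $Y=\int_0^1\Gamma_t^Y\,dB_t^Y$ as Doob/It\^o martingales. Summing the martingales (not the drifts) produces $\sqrt\lambda X+\sqrt{1-\lambda}Y=\int_0^1\tilde\Gamma_t\,d\tilde B_t$ where $\tilde\Gamma_t=\sqrt{\lambda(\Gamma_t^X)^2+(1-\lambda)(\Gamma_t^Y)^2}$ and $\tilde B$ is a new Brownian motion. The upper bound on the entropy of the convolution then involves $\tilde\Gamma_t$, not the convex combination $\lambda\Gamma^X_t+(1-\lambda)\Gamma^Y_t$, and the matrix identity in Lemma~\ref{lem:matrices} gives an explicit, strictly positive difference between the two. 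The Gaussians $G_X,G_Y$ are taken as $\int_0^1\EE[\Gamma_t^X]\,dB_t^X$ (replace the coefficient by its expectation), uniform log-concavity is used via Lemma~\ref{lem: bounded gamma} to get $\Gamma_t\preceq\mathrm{I}_d$, Corollary~\ref{corr: bounded uniform log-concave} gives $\EE[\Gamma_t^X]\succeq\mathrm{Cov}(X)\succeq\sigma_X^2\mathrm{I}_d$, and the $\sigma_X^4$ comes from the $c^2$ in Lemma~\ref{lem: entropy bound} with $c=\sigma_X^2$. None of these moves -- the change from drift-averaging to martingale-averaging, the quadratic-variation identity, or the Girsanov comparison of Lemma~\ref{lem: entropy bound} -- appears in your outline, and without the first of them the argument produces no deficit at all.
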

To compare this with the main result of \cite{courtade2016wasserstein} we recall the transportation-entropy inequality due to Talagrand  (\cite{talagrand1996transportation}) which states that $$\mathcal{W}_2^2(X,G) \leq 2\mathrm{D}(X||G).$$
As a conclusion we get 
\begin{align*}
	\delta_{EPI, \lambda}(X,Y) \geq C_{\sigma_X,\sigma_Y}
	\frac{\lambda(1-\lambda)}{2}\left(\mathcal{W}_2^2\left(X,G_X\right) + \mathcal{W}_2^2\left(Y,G_Y\right)+ \mathcal{W}_2^2\left(G_X,G_Y\right)\right),
\end{align*}
where $C_{\sigma_X,\sigma_Y}$ depends only on $\sigma_X$ and $\sigma_Y$.
Up to this constant, this is precisely the main result of \cite{courtade2016wasserstein}. In fact, our method can reproduce their exact result, which we present as a warm up in the next section.
We remark that as the underlying inequality is of information-theoretic nature, it is natural to expect that stability estimates are expressed in terms of relative entropy.

A random vector is isotropic if it is centered and its covariance matrix is the identity. By a re-scaling argument the above theorem can be restated for uniform log-concave isotropic random vectors. 
\begin{corollary} \label{cor: EPI for xi log-concave}
	Let $X$ and $Y$ be $\xi$-uniformly log-concave and isotropic random vectors, then there exist Gaussian vectors $G_X$ and $G_Y$  such that for any $\lambda \in (0,1)$ 
		\begin{align*}
	\delta_{EPI, \lambda}(X,Y) \geq
	\frac{\lambda(1-\lambda)}{2}\xi^2\left(\mathrm{D}\left(X||G_X\right) + \mathrm{D}\left(Y||G_Y\right)+ \frac{1}{2}\mathrm{D}\left(G_X||G_Y\right) + \frac{1}{2}\mathrm{D}\left(G_Y||G_X\right)\right).
	\end{align*}
\end{corollary}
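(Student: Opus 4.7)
The plan is to derive the corollary from Theorem \ref{thm: stability for uniform} by a simple common dilation. Set $\tilde X := \sqrt{\xi}\,X$ and $\tilde Y := \sqrt{\xi}\,Y$. If $X$ has density $f$ with $-\nabla^2 \ln f \succeq \xi\, \mathrm{I}_d$, then $\tilde X$ has density $\tilde f(y) = \xi^{-d/2} f(y/\sqrt{\xi})$, so the chain rule gives
$$-\nabla^2 \ln \tilde f(y) = \tfrac{1}{\xi}\bigl(-\nabla^2 \ln f(y/\sqrt{\xi})\bigr) \succeq \mathrm{I}_d,$$
which means $\tilde X$ (and analogously $\tilde Y$) is $1$-uniformly log-concave. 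Since $X$ is isotropic we have $\mathrm{Cov}(\tilde X) = \xi\, \mathrm{I}_d$, so the minimal eigenvalue of the covariance of $\tilde X$ is $\sigma_{\tilde X}^2 = \xi$, and hence $\sigma_{\tilde X}^4 = \xi^2$; the same holds for $\tilde Y$. This matches the coefficient appearing in the statement of the corollary.

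Next I would verify the two scaling invariances needed to transfer the conclusion back to $X, Y$. First, a direct computation shows $\mathrm{h}(\sqrt{\xi}\,Z) = \mathrm{h}(Z) + \tfrac{d}{2}\ln\xi$ for any random vector $Z$; since $X, Y$ are centered and independent, the second-moment terms in $\delta_{EPI,\lambda}$ cancel by $\EE\|\sqrt{\lambda}X + \sqrt{1-\lambda}Y\|_2^2 = \lambda\,\EE\|X\|_2^2 + (1-\lambda)\,\EE\|Y\|_2^2$, and the remaining contributions from the entropy shift $\tfrac{d}{2}\ln\xi$ combine with coefficients summing to zero. This gives the dilation invariance $\delta_{EPI,\lambda}(\tilde X, \tilde Y) = \delta_{EPI,\lambda}(X, Y)$. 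Second, relative entropy is invariant under any common invertible linear map $T$, i.e.\ $\mathrm{D}(TU \,||\, TV) = \mathrm{D}(U \,||\, V)$, by an immediate change of variables in the defining integral.

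Applying Theorem \ref{thm: stability for uniform} to $\tilde X, \tilde Y$ produces Gaussian vectors $G_{\tilde X}, G_{\tilde Y}$ for which
$$\delta_{EPI,\lambda}(\tilde X, \tilde Y) \geq \frac{\lambda(1-\lambda)}{2}\,\xi^{2}\!\left(\mathrm{D}(\tilde X \,||\, G_{\tilde X}) + \mathrm{D}(\tilde Y \,||\, G_{\tilde Y}) + \tfrac{1}{2}\mathrm{D}(G_{\tilde X} \,||\, G_{\tilde Y}) + \tfrac{1}{2}\mathrm{D}(G_{\tilde Y} \,||\, G_{\tilde X})\right).$$
Setting $G_X := \xi^{-1/2}\, G_{\tilde X}$ and $G_Y := \xi^{-1/2}\, G_{\tilde Y}$ (both still Gaussian) and invoking the relative-entropy invariance with $T = \xi^{-1/2} \mathrm{I}_d$ replaces each $\mathrm{D}(\tilde\cdot \,||\, \tilde\cdot)$ on the right-hand side by the corresponding $\mathrm{D}(\cdot \,||\, \cdot)$. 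Combined with the deficit invariance from the previous paragraph, this yields exactly the inequality claimed in the corollary.

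There is essentially no real obstacle here: all the substance is contained in Theorem \ref{thm: stability for uniform}, and the present argument is only careful bookkeeping of how the log-concavity constant, the covariance eigenvalues, the differential entropy, and the relative entropy transform under a common homothety. The one point worth attention is that $\sigma^2$ scales to $\xi$ (one factor of $\xi$), so that the factor $\sigma^4$ appearing in Theorem \ref{thm: stability for uniform} becomes precisely the $\xi^2$ advertised in the corollary.
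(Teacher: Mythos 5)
Your proposal is correct and is exactly the re-scaling argument the paper has in mind (the paper states the corollary follows from Theorem \ref{thm: stability for uniform} ``by a re-scaling argument'' without writing out the details, and you have supplied precisely those details: the dilation $\tilde X=\sqrt{\xi}X$ is $1$-uniformly log-concave with $\sigma^2_{\tilde X}=\xi$, the deficit is invariant under a common dilation of centered independent vectors, and relative entropy is invariant under a common invertible linear map). No gaps.
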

In our estimate for general log-concave vectors, the dependence on the parameter $\xi$ will be replaced by the spectral gap of the measures. We say that a random vector $X$ satisfies a Poincar\'e inequality if there exists a constant $C>0$ such that
$$\EE\left[\mathrm{Var}(\psi(X))\right]\leq C \EE\left[\norm{\nabla\psi(X)}_2^2\right], \text{ for all test functions } \psi.$$
We define $C_p(X)$ to be the smallest number such that the above equation holds with $C=C_p(X)$, and refer to this quantity as the Poincar\'e constant of $X$. The inverse quantity, $C_p(X)^{-1}$ is referred to as the \emph{spectral gap} of $X$.
\begin{theorem} \label{thm: EPI for general log-concave}
	Let $X$ and $Y$ be centered log-concave vectors with $\sigma^2_X$, $\sigma_Y^2$ denoting the minimal eigenvalues of their covariance matrices. Assume that 
	$\mathrm{Cov}(X) +  \mathrm{Cov}(Y) =2\mathrm{I}_d$ and set $\max\left(\frac{\mathrm{C_p}(X)}{\sigma_X^2},\frac{\mathrm{C_p}(Y)}{\sigma^2_Y}\right) = \mathrm{C_p}$. Then, if $G$ denotes the standard Gaussian, for every $\lambda \in (0,1)$
	\begin{align*}
	&\delta_{EPI, \lambda}(X,Y) \geq 
	K\lambda(1-\lambda)\left(\frac{\min(\sigma^2_Y,\sigma_X^2)}{\mathrm{C_p}}\right)^3\left(\mathrm{D}\left(X||G\right) + \mathrm{D}\left(Y||G\right)\right),
	\end{align*}
	where $K >0$ is a numerical constant, which can be made explicit.
\end{theorem}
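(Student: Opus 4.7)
The plan is to build on the Föllmer-process method and adapt it from the uniformly log-concave regime of Theorem~\ref{thm: stability for uniform} to the general log-concave setting by substituting the Poincar\'e constant for the convexity modulus. Associate to $X$ and $Y$ their Föllmer processes $(X_t)_{t\in[0,1]}, (Y_t)_{t\in[0,1]}$, driven by independent standard Brownian motions $B^X, B^Y$, with drifts $v_t^X, v_t^Y$. Recall the Föllmer entropy-energy identity $\mathrm{D}(X||G) = \tfrac{1}{2}\EE\int_0^1 \|v_t^X\|_2^2\,dt$, and analogously for $Y$. Setting $M_t := \sqrt{\lambda}v_t^X+\sqrt{1-\lambda}v_t^Y$ and $W_t := \sqrt{\lambda}B^X_t+\sqrt{1-\lambda}B^Y_t$ (a standard Brownian motion), the convex combination $Z_t = \sqrt{\lambda}X_t+\sqrt{1-\lambda}Y_t$ satisfies $dZ_t = M_t\,dt + dW_t$, and the Föllmer drift of $Z_1$ with respect to $\mathcal{F}^W_t$ is the $L^2$-projection $u_t = \EE[M_t \mid \mathcal{F}^W_t]$, whence $\mathrm{D}(Z_1||G) = \tfrac{1}{2}\EE\int_0^1 \|u_t\|_2^2\,dt$.

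Combining these identities with the Pythagorean decomposition $\EE\|u_t\|_2^2 = \EE\|M_t\|_2^2 - \EE\|M_t-u_t\|_2^2$ and the algebraic expansion of $\|M_t\|_2^2$, the EPI deficit rewrites as
\begin{equation*}
\delta_{EPI,\lambda}(X,Y) \;=\; \tfrac{1}{2}\EE\int_0^1\|M_t-u_t\|_2^2\,dt \;-\; \sqrt{\lambda(1-\lambda)}\,\EE\int_0^1 \langle v_t^X, v_t^Y\rangle\,dt.
\end{equation*}
The second term vanishes: by the independence of $B^X, B^Y$ the drifts $v^X, v^Y$ are independent, and for centered log-concave $\mu$ the Föllmer drift has $\EE[v_t^X]=0$ for every $t$ --- a quick consequence of integration by parts against the marginal density $\rho_t = \gamma_t \cdot P_{1-t}f_\mu$. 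Introducing the orthogonal Brownian motion $W'_t := \sqrt{1-\lambda}B^X_t - \sqrt{\lambda}B^Y_t$, the remaining quantity $\EE\|M_t-u_t\|_2^2$ reduces to $\lambda(1-\lambda)$ times the expected squared fluctuations of $v^X_t, v^Y_t$ around their conditional expectations given $X_t, Y_t$ respectively.

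To finish, these fluctuations must be bounded from below by the entropies $\mathrm{D}(X||G), \mathrm{D}(Y||G)$. For general log-concave $\mu$, the law $\rho_t$ of $X_t$ remains log-concave (by Pr\'ekopa, since $\rho_t$ is the $\mu$-mixture of Gaussians with means $ty$ and covariance $t(1-t)\mathrm{I}_d$), and its Poincar\'e constant is controlled by $\mathrm{C_p}(X)$ via the behavior of the spectral gap under Gaussian convolution. Writing $v_t^X(x) = \nabla \log P_{1-t}f_\mu(x)$ and exploiting the Brascamp-Lieb inequality on the log-concave interpolant together with the Poincar\'e inequality for $\rho_t$, one can show
\begin{equation*}
\EE\int_0^1\|v_t^X - \EE[v_t^X\mid X_t]\|_2^2\,dt \;\gtrsim\; \left(\frac{\sigma_X^2}{\mathrm{C_p}(X)}\right)^3 \mathrm{D}(X||G),
\end{equation*}
and symmetrically in $Y$. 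Combined with the $\lambda(1-\lambda)$ prefactor and the covariance normalization $\mathrm{Cov}(X) + \mathrm{Cov}(Y) = 2\mathrm{I}_d$ (which caps the maximal eigenvalues and unifies the bound), this yields the advertised inequality.

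The main obstacle is the structural estimate in the last step. In the uniformly log-concave case of Theorem~\ref{thm: stability for uniform}, the corresponding bound follows almost immediately from a semigroup contraction driven by the convexity modulus $\xi$. In the merely log-concave case, one must control the Lipschitz constant of the time-reversed Föllmer drift $v_t^X$ uniformly in $t\in (0,1)$, balancing the heat-semigroup regularization against the potential blowup of $\nabla v_t^X$ near $t=1$. The cubic dependence on $\sigma^2/\mathrm{C_p}$ in the final bound reflects three successive applications of a Poincar\'e/spectral-gap estimate: in the conditional-variance reduction of the deficit, in the Brascamp-Lieb bound on the drift's Jacobian, and in the final de~Bruijn-type translation of drift energy into relative entropy.
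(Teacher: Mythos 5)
Your plan has a structural flaw at its core.  You write that $\EE\|M_t-u_t\|_2^2$ ``reduces to $\lambda(1-\lambda)$ times the expected squared fluctuations of $v^X_t, v^Y_t$ around their conditional expectations given $X_t, Y_t$ respectively,'' and your final bound is on $\EE\int_0^1\|v^X_t-\EE[v^X_t\mid X_t]\|_2^2\,dt$.  But by equation \eqref{eq: follmer equation} the F\"ollmer drift $v^X_t=\nabla_x\ln P_{1-t}f_X(X_t)$ is a deterministic function of $X_t$, so $\EE[v^X_t\mid X_t]=v^X_t$ and this quantity is identically zero.  What you would need is conditioning on a \emph{smaller} filtration --- the one generated by $W_t=\sqrt\lambda B^X_t+\sqrt{1-\lambda}B^Y_t$ --- which captures how little of $v^X_t$'s randomness is visible to the ``mixed'' Brownian motion.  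The paper sidesteps this subtlety entirely by passing from drifts to the covariance-type processes $\Gamma^X_t$, $\Gamma^Y_t$ of the Doob martingale $\EE[X_1\mid\mathcal F_t]$ (Lemma \ref{lem:matrices} and Lemma \ref{lem: jump bound}), where the deficit becomes a clean integral of $\mathrm{Tr}\,\EE[(\Gamma^X_t-\Gamma^Y_t)^2]$.

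A second, independent gap is the identity $\mathrm{D}(Z_1\|G)=\tfrac12\EE\int_0^1\|u_t\|_2^2\,dt$ with $u_t=\EE[M_t\mid\mathcal F^W_t]$.  Conditional projection reduces the energy of an adapted drift, but the terminal value of the projected process $W_1+\int_0^1\EE[M_t\mid\mathcal F^W_t]\,dt$ need not have the law of $Z_1$ (indeed $Z_1$ is typically not $\mathcal F^W_1$-measurable), so $u_t$ need not be the F\"ollmer drift of $Z_1$ and the asserted equality is unjustified.  All you get for free is $\mathrm{D}(Z_1\|G)\le\tfrac12\EE\int\|M_t\|_2^2\,dt$, which is just Lehec's proof of the non-quantitative inequality.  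Finally, even granting the correct setup, the heart of the matter --- the inequality $\EE\int\|v^X_t-u^X_t\|_2^2\,dt\gtrsim(\sigma_X^2/\mathrm C_p(X))^3\mathrm D(X\|G)$ --- is asserted rather than proved; the paper obtains the cubic exponent through a genuinely different mechanism: the a.s. bound $\Gamma_t\preceq t^{-1}\mathrm I_d$ (Lemma \ref{lem: bounded gamma}), a lower bound $\EE[\Gamma_t]\succeq\xi\mathrm I_d$ (Corollary \ref{cor: lower gamma bound}), a truncation at $t_0=\xi^2$ made possible by the small-$t$ control of $\EE\|v^X_{t_0}\|_2^2$ (Lemma \ref{lem: smallv_t}), and a careful cancellation of the cross term $\EE[\mathrm I_d-\Gamma^X_t]\EE[\mathrm I_d-\Gamma^Y_t]$ using the normalization $\mathrm{Cov}(X)+\mathrm{Cov}(Y)=2\mathrm I_d$.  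None of these steps has a counterpart in your sketch, so the proposal as written cannot be completed along the indicated lines.
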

\begin{remark}
 For $\xi$-uniformly log-concave vectors, we have the relation, $\mathrm{C_p}(X) \leq \frac{1}{\xi}$ (this is a consequence of the Brascamp-Lieb inequality \cite{brascamp1976extensions}, for instance). Thus, considering Corollary \ref{cor: EPI for xi log-concave}, one might have expected that the term $\mathrm{C^3_p}$ could have been replaced by $\mathrm{C^2_p}$ in Theorem \ref{thm: EPI for general log-concave}. We do not know if either result is tight.
\end{remark}
\begin{remark} \label{rem: KLS}
	Bounding the Poincar\'e constant of an isotropic log-concave measure is the object of the long standing Kannan-Lov\'az-Simonovits (KLS) conjecture (see \cite{kannan1995isoperimetric,ledoux2004spectral} for more information). The conjecture asserts that there exists a constant $K >0$, independent of the dimension, such that for any isotropic log-concave vector $X$, $\mathrm{C_p}(X) \leq K$. The best known bound is due to Lee and Vempala which showed in \cite{lee2017eldan} that if $X$ is a a $d$-dimensional log-concave vector, $\mathrm{C_p}(X) = O \left(\sqrt{d}\right).$
\end{remark}
Concerning the assumptions of Theorem \ref{thm: EPI for general log-concave}; note that as the EPI is invariant to linear transformation, there is no loss in generality in assuming $\mathrm{Cov}(X) + \mathrm{Cov}(Y) = 2\mathrm{I}_d$. Remark that $\mathrm{C_p}(X)$ is, approximately, proportional to the maximal eigenvalue of $\mathrm{Cov}(X)$. Thus, for ill-conditioned covariance matrices $\frac{\mathrm{C_p}(X)}{\sigma_X^2},\frac{\mathrm{C_p}(Y)}{\sigma^2_Y}$ will not be on the same scale. It seems plausible to conjecture that the dependence on the minimal eigenvalue and Poicnar\'e constant could be replaced by a quantity which would take into consideration all eigenvalues.
\\
\\
Some other known stability results, both for log-concave vectors and for other classes of measures, may be found in \cite{toscani2015strengthened,courtade2018quantitative,courtade2016wasserstein}. The reader is referred to  \cite[Section 2.2]{courtade2016wasserstein} for a complete discussion. Let us mention one important special case, which is relevant to our results; the so-called entropy jump, first proved for the one dimensional case by Ball, Barthe and Naor (\cite{ball2003entropy}) and then generalized by Ball and Nguyen to arbitrary dimensions in \cite{ball2012entropy}. According to the latter result, if $X$ is a log-concave and isotropic random vector, then
$$\delta_{EPI, \frac{1}{2}}(X,X) \geq \frac{1}{8\mathrm{C_p}(X)}\mathrm{D}(X||G),$$
where $\mathrm{C_p}(X)$ is the Poincar\'e constant of $X$ and $G$ is the standard Gaussian. This should be compared to both Corollary \ref{cor: EPI for xi log-concave} and Theorem \ref{thm: EPI for general log-concave}. That is, in the special case of two identical measures and $\lambda = \frac{1}{2}$, their result gives a better dependence on the Poincar\'e constant than the one afforded by our results.

Ball and Nguyen (\cite{ball2012entropy}) also give an interesting motivation for these type of inequalities: They show that if for some constant $\kappa > 0$,
$$\delta_{EPI, \frac{1}{2}}(X,X) \geq  \kappa \mathrm{D}(X||G),$$
then the density $f_X$ of $X$ satisfies, $f_X(0) \leq e^{\frac{2d}{\kappa}}$. The isotropic constant of $X$ is defined by $L_X := f_X(0)^{\frac{1}{d}}$, and is the main subject of the slicing conjecture, which hypothesizes that $L_X$ is uniformly bounded by a constant, independent of the dimension, for every isotropic log-concave vector $X$. Ball and Nguyen observed that using the above fact in conjunction with an entropy jump estimate gives a bound on the isotropic constant in terms of the Poincar\'e constant, and in particular the slicing conjecture is implied by the KLS conjecture.

Our final results give improved bounds under the assumption that $X$ and $Y$ are already close to being Gaussian, in terms of relative entropy, or if one them is a Gaussian. We record these results in the following theorems. 
\begin{theorem}\label{thm: low entropy}
	Suppose that $X, Y$ be isotropic log-concave vectors such that $\mathrm{C_p}(X),\mathrm{C_p}(Y) \leq \mathrm{C_p}$ for some $\mathrm{C_p} < \infty$. Suppose further that $\mathrm{D}(X||G), \mathrm{D}(Y||G) \leq \frac{1}{4}$, then
	$$\delta_{EPI, \lambda}(X,Y) \geq \frac{\lambda(1-\lambda)}{36\mathrm{C_p}}\left(\mathrm{D}(X||G) + \mathrm{D}(Y||G)\right)$$
\end{theorem}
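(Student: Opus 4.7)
The plan is to sharpen the F\"ollmer-process analysis underlying the proof of Theorem~\ref{thm: EPI for general log-concave}, using the low-entropy hypothesis to reduce the $\mathrm{C_p}^{-3}$ dependence there to a linear $\mathrm{C_p}^{-1}$. Let $v^X$ and $v^Y$ denote the F\"ollmer drifts of $X$ and $Y$, driven by independent standard Brownian motions $W^X$ and $W^Y$, so that $\mathrm{D}(X||G) = \tfrac{1}{2}\EE\int_0^1 \|v^X_t\|^2\,dt$ and similarly for $Y$. Setting $W_t := \sqrt{\lambda}\,W^X_t + \sqrt{1-\lambda}\,W^Y_t$, which is again a standard Brownian motion, and $u_t := \sqrt{\lambda}\,v^X_t + \sqrt{1-\lambda}\,v^Y_t$, the process $Z_t := W_t + \int_0^t u_s\,ds$ satisfies $Z_1 \sim \sqrt{\lambda}X + \sqrt{1-\lambda}Y$. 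The F\"ollmer variational principle, already used in the preceding sections, yields
$$
\delta_{EPI,\lambda}(X,Y) \;\geq\; \tfrac{1}{2}\EE\int_0^1 \bigl\|u_t - \EE[u_t \mid \mathcal{F}^W_t]\bigr\|^2\,dt,
$$
where $\mathcal{F}^W_t = \sigma(W_s : s \leq t)$.

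Write $p^X_t := \EE[v^X_t\mid\mathcal{F}^W_t]$ and $p^Y_t := \EE[v^Y_t\mid\mathcal{F}^W_t]$. Using the independence of $v^X$ and $v^Y$, together with the fact that the F\"ollmer drift of a centered measure has $\EE v^X_t = \EE v^Y_t = 0$ (a consequence of $\EE[B^X_t] = t\cdot\EE[X] = 0$, valid for the F\"ollmer process of a centered law), so that $\EE\langle v^X_t, v^Y_t\rangle = 0$, a direct Pythagorean computation rewrites the integrand above as
$$
\lambda\,\EE\|v^X_t\|^2 + (1-\lambda)\,\EE\|v^Y_t\|^2 \,-\, \EE\bigl\|\sqrt{\lambda}\,p^X_t + \sqrt{1-\lambda}\,p^Y_t\bigr\|^2.
$$
Thus the proof reduces to establishing a contraction estimate of the form
$$
\EE\bigl\|\sqrt{\lambda}\,p^X_t + \sqrt{1-\lambda}\,p^Y_t\bigr\|^2 \;\leq\; \lambda\,\EE\|v^X_t\|^2 + (1-\lambda)\,\EE\|v^Y_t\|^2 \,-\, \frac{\lambda(1-\lambda)}{18\,\mathrm{C_p}}\bigl(\EE\|v^X_t\|^2 + \EE\|v^Y_t\|^2\bigr)
$$
uniformly in $t\in[0,1]$. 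Integrating in $t$ and using $\int_0^1 \EE\|v^X_t\|^2\,dt = 2\mathrm{D}(X||G)$ together with the matching identity for $Y$ then produces the announced bound $\tfrac{\lambda(1-\lambda)}{36\,\mathrm{C_p}}\bigl(\mathrm{D}(X||G) + \mathrm{D}(Y||G)\bigr)$.

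Heuristically, the contraction holds because $v^X_t$ is a functional of the history of $W^X$, while $\mathcal{F}^W_t$ only sees $\sqrt{\lambda}W^X + \sqrt{1-\lambda}W^Y$ contaminated by the independent noise $W^Y$: conditioning therefore erases the component of $v^X_t$ that is genuinely nonlinear in $W^X$. A Brascamp--Lieb / Poincar\'e stochastic representation, of the kind used in the proof of Theorem~\ref{thm: EPI for general log-concave}, quantifies the size of this erasure in terms of $\mathrm{C_p}$. In the general-entropy setting that estimate picks up an additional factor of $\mathrm{C_p}^2$ from Hessian corrections appearing in the time-derivative of the drift; the role of the low-entropy hypothesis is to linearize those Hessians around the zero drift of the Gaussian, with errors controlled by $\mathrm{D}(X||G), \mathrm{D}(Y||G) \leq 1/4$, so that the Poincar\'e inequality enters exactly once and contributes only a single factor of $1/\mathrm{C_p}$.

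The main obstacle is this quantitative contraction. Its proof amounts to a perturbative analysis of the F\"ollmer drift around its Gaussian baseline: the Hessian corrections that would otherwise cost two further powers of $\mathrm{C_p}$ must be absorbed into $O(1)$ losses controlled by the entropy bound $1/4$, and the bookkeeping of these losses, together with the combinatorics of the factor $\lambda(1-\lambda)$, must be tracked carefully enough to deliver the explicit constant $1/36$. In the general setting of Theorem~\ref{thm: EPI for general log-concave} no such smallness of the drifts is available, which is precisely why that theorem is forced to the much weaker dependence $\mathrm{C_p}^{-3}$ obtained there.
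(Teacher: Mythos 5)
Your proposal takes a genuinely different route from the paper's, but it stops at exactly the step that constitutes the theorem. The paper's proof of Theorem \ref{thm: low entropy} works entirely at the level of the covariance processes $\Gamma_t^X,\Gamma_t^Y$: it combines Lemma \ref{lem: jump bound} with the log-concave bound $\Gamma_t \preceq t^{-1}\mathrm{I}_d$ from Lemma \ref{lem: bounded gamma}, uses isotropy and the hypothesis $\mathrm{D}(X||G)\le\frac14$ together with \eqref{eq:idvtgamma} to get $0\preceq\mathrm{I}_d-\EE[\Gamma_t^X]\preceq\frac12\mathrm{I}_d$, controls the cross term $\mathrm{Tr}\,\EE[\mathrm{I}_d-\Gamma_t^X]\EE[\mathrm{I}_d-\Gamma_t^Y]/(1-t)$ by $\tfrac14(\EE\|v_t^X\|^2+\EE\|v_t^Y\|^2)$, and finishes with a variant of Lemma \ref{lem: smallv_t} at the non-squared scale $\xi_X=\tfrac1{3(2\mathrm{C_p}(X)+1)}$, truncating the integral in \eqref{eq: truncation} at $t_0=\xi$ (rather than $\xi^2$, which is where the improvement from $\mathrm{C_p}^{-3}$ to $\mathrm{C_p}^{-1}$ is actually won). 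Your plan instead returns to Lehec's drift average $u_t=\sqrt\lambda\,v_t^X+\sqrt{1-\lambda}\,v_t^Y$ and tries to quantify the projection loss directly at the level of the drifts.

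The reduction you carry out, $\delta_{EPI,\lambda}\ge\tfrac12\int_0^1\EE\|u_t-\EE[u_t\mid\cdot]\|^2\,dt$ followed by Pythagoras, is formally reasonable, with one repair needed: the variational/Girsanov argument produces the projection onto the filtration generated by the \emph{output} process $Z_t=W_t+\int_0^t u_s\,ds$, not $\mathcal F^W_t$; since $\mathcal F^Z_t$ and $\mathcal F^W_t$ are not nested, you need to justify why conditioning on $\mathcal F^W_t$ still yields a valid entropy upper bound. The genuine gap, though, is the ``contraction estimate''
\[
\EE\bigl\|\sqrt\lambda\,p_t^X+\sqrt{1-\lambda}\,p_t^Y\bigr\|^2\le\lambda\EE\|v_t^X\|^2+(1-\lambda)\EE\|v_t^Y\|^2-\frac{\lambda(1-\lambda)}{18\,\mathrm{C_p}}\bigl(\EE\|v_t^X\|^2+\EE\|v_t^Y\|^2\bigr),
\]
which you yourself flag as ``the main obstacle.'' It is not proved; the appeal to a ``Brascamp--Lieb / Poincar\'e stochastic representation'' and ``perturbative analysis around the Gaussian baseline'' is a sketch of intent, not an argument. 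Nor is it clear that the estimate should hold pointwise in $t$ (the paper's proof only produces integrated bounds and crucially uses the truncation at $t_0=\xi$ to cope with the singular behaviour of $\Gamma_t$ near $t=0$). You would need to control both the Jensen gap $\EE\|v_t^X\|^2-\EE\|p_t^X\|^2$ and the non-vanishing cross term $\EE\langle p_t^X,p_t^Y\rangle$ quantitatively in terms of $\mathrm{C_p}$, explain precisely where $\mathrm{D}\le\frac14$ enters, and track the constants to land on $1/36$; none of this is done. As written, the proposal reduces the theorem to an unproved inequality that is essentially as strong as the theorem itself.
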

The following gives an improved bound in the case that one of the random vectors is a Gaussian, and holds in full generality with respect to the other vector, without a log-concavity assumption.
\begin{theorem} \label{thm: gaussian conv}
	Let $X$ be a centered random vector with finite Poincar\'e constant, $\mathrm{C_p}(X) < \infty$. Then
	$$\delta_{EPI, \lambda}(X,G) \geq \left(\lambda - \frac{\lambda\left(\mathrm{C_p}(X) - 1\right) - \ln\left(\lambda\left(\mathrm{C_p}(X)-1\right)+1\right)}{\mathrm{C_p}(X)  -\ln\left(\mathrm{C_p}(X)\right)- 1}\right)\mathrm{D}(X||G).$$
\end{theorem}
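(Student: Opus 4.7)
The plan is to reinterpret $\sqrt{\lambda}X+\sqrt{1-\lambda}G$, in distribution, as the state $\mu_t$ at time $t=-\tfrac{1}{2}\log\lambda$ of the Ornstein--Uhlenbeck semigroup started from the law of $X$, with $\gamma$ as the invariant measure. Setting $\beta(t):=\mathrm{D}(\mu_t\|\gamma)$ and $C:=\mathrm{C_p}(X)$, the deficit becomes $\delta_{EPI,\lambda}(X,G)=\lambda\,\mathrm{D}(X\|G)-\beta(t)$, so it suffices to prove the sharp upper bound
\[
\beta(t)\;\leq\; \frac{g(e^{-2t}(C-1))}{g(C-1)}\,\mathrm{D}(X\|G), \qquad g(s):=s-\ln(1+s),
\]
since substituting $\lambda=e^{-2t}$ reproduces the factor in the statement. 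The form of this target bound is dictated by the Gaussian extremal case: for $X\sim\gamma_C$ one has $\mu_t\sim\gamma_{1+e^{-2t}(C-1)}$, so $\beta(t)=\tfrac{d}{2}g(e^{-2t}(C-1))$ and $\mathrm{D}(X\|G)=\tfrac{d}{2}g(C-1)$, and equality is attained.

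The first ingredient is a control of the Poincar\'e constant along the OU trajectory. Since $e^{-t}X$ and $\sqrt{1-e^{-2t}}G$ are independent, the law of total variance yields the subadditivity $\mathrm{C_p}(\mu*\nu)\leq \mathrm{C_p}(\mu)+\mathrm{C_p}(\nu)$ for independent convolutions (conditioning on one summand, applying Poincar\'e to each factor, and using Jensen's inequality for the gradient of the conditional expectation), and hence
\[
\mathrm{C_p}(\mu_t)\;\leq\; e^{-2t}\,\mathrm{C_p}(X)+(1-e^{-2t})\;=\;1+s(t), \qquad s(t):=e^{-2t}(C-1),
\]
with equality when $X\sim\gamma_C$.

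The second ingredient is a sharp Gr\"onwall-type inequality. De Bruijn's identity for the OU semigroup gives $\beta'(t)=-I(\mu_t\|\gamma)$, where $I$ denotes the relative Fisher information, and a direct computation shows $\tfrac{d}{dt}\log g(s(t))=-\Phi(s(t))$ with $\Phi(s):=\tfrac{2s^2}{(1+s)g(s)}$, which decreases from $\Phi(0^+)=4$ to $\Phi(\infty)=2$. Therefore, $\log(\beta(t)/g(s(t)))$ being non-increasing reduces to the Poincar\'e-sharpened log-Sobolev inequality
\[
I(\mu\|\gamma)\;\geq\; \Phi\bigl(\mathrm{C_p}(\mu)-1\bigr)\,\mathrm{D}(\mu\|\gamma),
\]
which holds with equality for centered Gaussians and interpolates between log-Sobolev (the $\mathrm{C_p}\to\infty$ limit, $\Phi\to 2$) and a tight factor-of-$4$ inequality as $\mathrm{C_p}\to 1$. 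Establishing this functional inequality is the main technical obstacle; I would attack it through the F\"ollmer-process representations $\mathrm{D}(\mu\|\gamma)=\tfrac{1}{2}\EE\!\left[\int_0^1\|u_r\|^2\,dr\right]$ and $I(\mu\|\gamma)=\EE\!\left[\|u_1\|^2\right]$, combining the martingale property of $u_r$ (so that $\EE[\|u_r\|^2]$ is non-decreasing with derivative $\EE[\|\nabla u_r\|_{HS}^2]$) with the Poincar\'e hypothesis on $\mu$, and optimizing the resulting ODE so as to recover the Gaussian extremizers. Granted this inequality, the monotonicity of $\Phi$ together with the bound $\mathrm{C_p}(\mu_t)-1\leq s(t)$ yields $\beta'(t)\leq -\Phi(s(t))\beta(t)$; integrating gives $\beta(t)\leq \beta(0)\,g(s(t))/g(s(0))$, which upon substituting $\lambda=e^{-2t}$ is exactly the bound claimed in Theorem~\ref{thm: gaussian conv}.
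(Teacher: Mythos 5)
Your reduction is sound: reinterpreting $\sqrt{\lambda}X+\sqrt{1-\lambda}G$ as the time-$t$ state of the Ornstein--Uhlenbeck flow (with $\lambda=e^{-2t}$), applying de~Bruijn's identity, the subadditivity $\mathrm{C_p}(\mu_t)\le 1+e^{-2t}(\mathrm{C_p}(X)-1)$, and a Gr\"onwall comparison against the Gaussian extremizer is a coherent way to derive the claimed bound. In fact, this is precisely the alternative route that the paper records in the remark immediately following Theorem~\ref{thm: gaussian conv}: the theorem follows from the Poincar\'e-improved log-Sobolev inequality of Fathi, Indrei and Ledoux, integrated along the OU semigroup. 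So your plan is not new to the paper's authors --- they explicitly defer to it as a known alternate proof.

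The genuine gap is that your argument hinges entirely on the inequality
\[
I(\mu\|\gamma)\ \ge\ \Phi\bigl(\mathrm{C_p}(\mu)-1\bigr)\,\mathrm{D}(\mu\|\gamma),\qquad \Phi(s)=\frac{2s^2}{(1+s)\bigl(s-\ln(1+s)\bigr)},
\]
and you do not prove it; you only say you ``would attack it'' via F\"ollmer-process representations and ODE optimization. That improved log-Sobolev inequality is substantially harder than anything you have actually established, and as written your proposal does not constitute a proof of Theorem~\ref{thm: gaussian conv} --- it reduces it to an unproven (if known) lemma. Note also that your claimed F\"ollmer identity $I(\mu\|\gamma)=\EE[\|u_1\|^2]$ needs care, since the F\"ollmer drift need not be square-integrable at $t=1$ without extra regularity; this is exactly the kind of delicate boundary issue the paper's own argument avoids.

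By contrast, the paper's proof is self-contained and bypasses any form of improved log-Sobolev inequality. It exploits the Brownian-bridge identity \eqref{eq: Brownian bridge} to write
\[
\mathrm{D}\bigl(\sqrt{\lambda}X+\sqrt{1-\lambda}G\,\big\|\,G\bigr)=\tfrac12\int_0^{\lambda}\EE\bigl[\|v_t^X\|_2^2\bigr]\,dt,
\]
then uses the Gr\"onwall-type comparison of Lemma~\ref{lem: gromwall poincare} (which is a direct consequence of the Poincar\'e hypothesis via Lemma~\ref{lem: follmer poincare}) to bound $\EE[\|v_t^X\|_2^2]$ on either side of $t=\lambda$ by an explicit rational function of $t$. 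Two elementary integrals $I_1,I_2$ and the monotonicity of $\alpha\mapsto\alpha/(\alpha+\beta)$ then give the constant in closed form. Everything needed is already developed in Section~2. If you want to salvage your proposal, you should either supply a full proof of the improved log-Sobolev inequality or observe that Lemma~\ref{lem: gromwall poincare} already encodes what you need pointwise in $t$, without any detour through Fisher information.
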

\begin{remark}
	When $\mathrm{C_p}(X) \geq 1$, the following inequality holds
	$$\left(\lambda - \frac{\lambda\left(\mathrm{C_p}(X) - 1\right) - \ln\left(\lambda\left(\mathrm{C_p}(X)-1\right)+1\right)}{\mathrm{C_p}(X)  -\ln\left(\mathrm{C_p}(X)\right)- 1}\right) \geq \frac{\lambda(1-\lambda)}{\mathrm{C_p}(X)}.$$
\end{remark}
\begin{remark}
	Theorem \ref{thm: gaussian conv} was already proved in \cite{courtade2016wasserstein} by using a slightly different approach. Denote by $\mathrm{I}(X||G)$, the relative Fisher information of the random vector $X$. In \cite{fathi2016logsobolev} the authors proof the following improved log-Sobolev inequality.
	$$\mathrm{I}(X||G) \geq 2\mathrm{D}(X||G)\frac{(1-\mathrm{C_p}(X))^2}{\mathrm{C_p}(X)(\mathrm{C_p}(X)-\ln\left(\mathrm{C_p}(X)-1\right))}.$$
	The theorem follows by integrating the inequality along the Ornstein-Uhlenbeck semi-group. 
\end{remark}

\subsection*{Acknowledgments}
We are grateful to Alex Zhai for several enlightening exchanges of ideas, and are thankful to Bo'az Klartag and Max Fathi for useful discussions. We would also like to thank Tom Courtade for his thoughtful comments concerning a preliminary draft and for suggesting that we generalize the proof of Theorem \ref{thm: EPI for general log-concave} for arbitrary covariance structures.

\section{Bounding the deficit via martingale embeddings} \label{sec: approach}

Our approach is based on ideas somewhat related to the ones which appear in \cite{eldan2018clt}: the very high-level plan of the proof is to embed the variables $X,Y$ as the terminal points of some martingales and express the entropies of $X,Y$ and $X+Y$ as functions of the associates quadratic co-variation processes. One of the main benefits in using such an embedding is that the co-variation process of $X+Y$ can be easily expressed in terms on the ones of $X,Y$, as demonstrated below. In \cite{eldan2018clt} these ideas where used to produce upper bounds for the entropic central limit theorem, so it stands to reason that related methods may be useful here. It turns out, however, that in order to produce meaningful bounds for the Shannon-Stam inequality, one needs a more intricate analysis, since this inequality corresponds to a second-derivative phenomenon: whereas for the CLT one only needs to produce upper bounds on the relative entropy, here we need to be able to compare, in a non-asymptotic way, two relative entropies.
	
In particular, our martingale embedding is constructed using the entropy minimizing technique developed by F\"ollmer (\cite{follmer1985entropy, follmer1986time}) and later Lehec (\cite{lehec2013representation}). This construction has several useful features, one of which is that it allows us to express the relative entropy of a measure in $\RR^d$ in terms of a variational problem on the Wiener space. In addition, upon attaining a slightly different point of view on this process, that we introduce here, the behavior of this variational expression turns out to be tractable with respect to convolutions. 

In order to outline the argument, fix centered measures $\mu$ and $\nu$ on $\RR^d$ with finite second moment. Let $X \sim \mu$, $Y \sim \nu$ be random vectors and $G \sim \gamma$ a standard Gaussian random vector. \\ \\
\textbf{An entropy-minimizing drift}. 
Let $B_t$ be a standard Brownian motion on $\RR^d$ and denote by $\mathcal{F}_t$ its natural filtration. In the sequel, the following process plays a fundamental role:

\begin{equation} \label{eq: follmer drift}
v^X_t = \arg\min\limits_{u_t} \frac{1}{2}\int\limits_0^1\EE\left[\norm{u_t}_2^2\right]dt,
\end{equation}
where the minimum is taken with respect to all processes $u_t$ adapted to $\FF_t$, such that
$$B_1 + \int\limits_0^1u_tdt \sim \mu.$$
Amazingly, under mild assumptions on $\mu$, and in particular in the case that $\mu$ is log-concave, there exists a unique minimizer to Equation \eqref{eq: follmer drift}, from which we construct the process  
$$X_t := B_t + \int\limits_0^tv^X_sds,$$
also known as the F\"ollmer process, with $v_t^X$ being the associated F\"ollmer drift. We refer the reader to \cite{lehec2013representation} for proofs of the existence and uniqueness of the process, as well as of a few other facts summarized below.

It turns out that the process $v_t^X$ is a martingale (which goes together with the fact that it minimizes a quadratic form) which is given by the equation
\begin{equation} \label{eq: follmer equation}
v_t^X = \nabla_x \ln\left(P_{1-t}(f_X(X_t))\right),
\end{equation} 
where $f_X$ is the density of $X$ with respect to the standard Gaussian  and $P_{1-t}$ denotes the heat semi-group. In fact, Girsanov's formula gives a very useful relation between the energy of the drift and the entropy of $X$, namely,
\begin{equation} \label{eq: entropy energy}
\mathrm{D}(X||G) = \frac{1}{2}\int\limits_0^1\EE\left[\norm{v_t^X}_2^2\right]dt.
\end{equation}
This gives the following alternative interpretation for the process: suppose that the Wiener space is equipped with an underlying probability measure $P$, with respect to which the process $B_t$ is a Brownian motion as above. Let $Q$ be a measure on Wiener space such that 
$$
\frac{dP}{dQ} = \frac{d \mu}{d \gamma} (X_1),
$$
then the process $X_t$ is a Brownian motion with respect to the measure $Q$. By the representation theorem for the Brownian bridge, this tells us that the process $X_t$ conditioned on $X_1$ is a Brownian bridge between $0$ and $X_1$. In particular, we have
\begin{equation} \label{eq: Brownian bridge}
X_t \stackrel{d}{=} tX_1 +\sqrt{ t(1-t)}G.
\end{equation}
\textbf{Lehec's proof of the Shannon-Stam inequality.} For the sake of intuition, we now repeat Lehec's argument to reproduce the Shannon-Stam inequality \eqref{eq: relative shannon-stam} using this process. Let $X_t := B^X_t + \int\limits_0^tv^X_sds$ and $Y_t := B^Y_t + \int\limits_0^tv^Y_sds$ be the F\"ollmer processes associated to $X$ and $Y$, where $B_t^X$ and $B_t^Y$ are independent Brownian motions. For $\lambda \in (0,1)$, define the new processes $$w_t = \sqrt{\lambda} v_t^X + \sqrt{1-\lambda}v_t^Y,$$
and
$$\tilde{B}_t = \sqrt{\lambda}B_t^X + \sqrt{1-\lambda}B_t^Y.$$
By the independence of $B_t^X$ and $B_t^Y$, $\tilde{B}_t$ is a Brownian motion and 
$$\tilde{B}_1 + \int\limits_0^1 w_tdt = \sqrt{\lambda}X_1 + \sqrt{1-\lambda}Y_1.$$ 
Note that as the $v_t^X$ is martingale, we have for every $t\in[0,1]$,
$$\EE\left[v_t^X\right] = \EE\left[X_1\right] = 0.$$
Using equations \eqref{eq: follmer drift} and \eqref{eq: entropy energy} and recalling that the processes are independent, we finally have
\begin{align*}
\mathrm{D}(\sqrt{\lambda}X_1 + \sqrt{1-\lambda}Y_1||G) &\leq \frac{1}{2}\int\limits_0^1\EE\left[\norm{w_t}_2^2\right]dt \\
&= \frac{\lambda}{2}\int\EE\left[\norm{v_t^X}_2^2\right]dt + \frac{1-\lambda}{2}\int\EE\left[\norm{v_t^Y}_2^2\right]dt \\
&= \lambda \mathrm{D}(X_1||G) + (1-\lambda)\mathrm{D}(Y_1||G).
\end{align*}
This recovers the Shannon-Stam inequality in the form \eqref{eq: relative shannon-stam}. \\ \\
\textbf{An alternative point of view: Replacing the drift by a varying diffusion coefficient.} 
Lehec's proof gives rise to the following idea: Suppose the processes $v_t^X$ and $v_t^Y$ could be coupled in a way such that the variance of the resulting process $\sqrt{\lambda} v_t^X + \sqrt{1-\lambda}v_t^Y$ was smaller than that of $w_t$ above. Such a coupling would improve on \eqref{eq: relative shannon-stam} and that is the starting point of this work.

As it turns out, however, it is easier to get tractable bounds by working with a slightly different interpretation of the above processes, in which the role of the drift is taken by an adapted diffusion coefficient of a related process. 

The idea is as follows: Suppose that $M_t := \int\limits_0^t F_sdB_s$ is a martingale, where $F_t$ is some positive-definite matrix valued process adapted to $\FF_t$. Consider the drift defined by
\begin{equation} \label{eq: mart to drift}
u_t := \int\limits_0^t\frac{F_s - \mathrm{I}_d}{1-s}dB_s.
\end{equation}
We then claim that $B_1 + \int\limits_{0}^1u_tdt = M_1$. To show this, we use the stochastic Fubini Theorem (\cite{veraar2012stochastic}) to write
$$\int\limits_0^1F_tdB_t = \int\limits_0^1\mathrm{I}_ddB_t+\int\limits_0^1\left(F_t-\mathrm{I}_d\right)dB_t = B_1 + \int\limits_0^1\int\limits_t^1 \frac{F_t - \mathrm{I}_d}{1-t}dsdB_t = B_1 + \int\limits_0^1u_tdt.$$

Since we now expressed the random variable $M_1$ as the terminal point of a standard Brownian motion with an adapted drift, the minimality property of the F\"ollmer drift together with equation \eqref{eq: entropy energy} immediately produce a bound on its entropy. Namely, by using It\^o's isometry and Fubini's theorem we have the bound
\begin{equation} \label{eq: entropy bound}
\mathrm{D}(M_1||G) \stackrel{\eqref{eq: entropy energy}}{\leq} \frac{1}{2}\int\limits_{0}^1\EE\left[\norm{u_t}_2^2\right] =\frac{1}{2} \mathrm{Tr}\int\limits_0^1\int\limits_0^t\frac{\EE\left[\left(F_s - \mathrm{I}_d\right)^2\right]}{(1-s)^2}dsdt =\frac{1}{2} \mathrm{Tr}\int\limits_{0}^1\frac{\EE\left[\left(F_t - \mathrm{I}_d\right)^2\right]}{1-t}dt.
\end{equation}

This hints at the following possible scheme of proof: in order to give an upper bound for the expression $\mathrm{D}(\sqrt{\lambda}X_1 + \sqrt{1-\lambda}Y_1||G)$, it suffices to find martingales $M_t^X$ and $M_t^Y$ such that $M_1^{X}, M_1^Y$ have the laws of $X$ and $Y$, respectively, and such that the $\lambda$-average of the covariance processes is close to the identity. 

The F\"ollmer process gives rise to a natural martingale: Consider $\EE\left[X_1|\mathcal{F}_t\right]$, the associated Doob martingale. By the martingale representation theorem (\cite[Theorem 4.3.3]{oksendal2003stochastic}) there exists a uniquely defined adapted matrix valued process $\Gamma_t^X$, for which
\begin{equation}\label{eq:defgamma}
\EE\left[X_1|\mathcal{F}_t\right] = \int\limits_0^t\Gamma^X_sdB_s^X.
\end{equation}
By following the construction in \eqref{eq: mart to drift} and considering the process $\tilde{v}_t^X := \int\limits_0^t\frac{\Gamma^X_s -\mathrm{I}_d}{1-s}dB^X_s$, it is immediate that $B_1 + \int\limits_{0}^1\tilde{v}_t^Xdt = X_1$. Observe that $v_t - \tilde{v}_t$ is a martingale and that for every $t \in [0,1]$, $\int\limits_t^1(v_s^X - \tilde{v}_s^X)ds|\FF_t = 0,$ almost surely. It thus follows that $v_t^X$ and $\tilde v_t^X$ are almost surely the same process. We conclude the following representation for the F\"ollmer drift,
\begin{equation} \label{eq: follmer to gamma}
v_t^X = \int\limits_0^t\frac{\Gamma^X_s -\mathrm{I}_d}{1-s}dB^X_s.
\end{equation}
The matrix $\Gamma_t^X$ turns out to be positive definite almost surely, (in fact, it has an explicit simple representation, see Proposition \ref{prop: gamma properties} below), which yields, by the combining \eqref{eq: entropy energy} with same calculation as in \eqref{eq: entropy bound},
\begin{equation} \label{eq: martingale entropy}
\mathrm{D}(X||G) = \frac{1}{2}\int\limits_0^1\frac{\mathrm{Tr}\left(\EE\left[\left(\Gamma_s^X-\mathrm{I}_d \right)^2\right]\right)}{1-t}dt.
\end{equation}

Given the processes $\Gamma_t^X$ and $\Gamma_t^Y$, we are now in position to express $\sqrt{\lambda} X + \sqrt{1-\lambda} Y$ as the terminal point of a martingale, towards using \eqref{eq: entropy bound}, which would lead to a bound on $\delta_{EPI,\lambda}$. We define $$\tilde{\Gamma_t} := \sqrt{\lambda\left(\Gamma_t^X\right)^2 + (1-\lambda)\left(\Gamma_t^Y\right)^2},$$ and a martingale $\tilde{B}_t$ which satisfies $$\tilde{B}_0 = 0 \text{ and } d\tilde{B}_t = \tilde{\Gamma}^{-1}_t\left(\sqrt{\lambda}\Gamma_t^XdB_t^X + \sqrt{1-\lambda}\Gamma_t^YdB_t^Y\right).$$
Since $\Gamma_t^X$ and $\Gamma_t^Y$ are invertible almost surely and independent, it holds that
$$[\tilde{B}]_t = t\mathrm{I}_d,$$
where $[\tilde{B}]_t$ denotes the quadratic co-variation of $\tilde{B}_t$. Thus, by Levy's characterization, $\tilde{B}_t$ is a standard Brownian motion and 
we have the following equality in law
$$\int\limits_0^1\tilde{\Gamma_t}d\tilde{B}_t = \sqrt{\lambda}\int\limits_0^1\Gamma_t^XdB_t^X + \sqrt{1-\lambda}\int\limits_{0}^1\Gamma_t^YdB_t^Y \stackrel{d}{=} \sqrt{\lambda}X_1 + \sqrt{1-\lambda}Y_1.$$
We can now invoke \eqref{eq: entropy bound} to get
$$\mathrm{D}\left(\sqrt{\lambda}X_1 + \sqrt{1-\lambda}Y_1\big|\big|G\right) \leq \frac{1}{2} \int\limits_{0}^1\frac{\mathrm{Tr}\left(\EE\left[\left(\tilde{\Gamma_t} - \mathrm{I}_d\right)^2\right]\right)}{1-t}dt.$$
Combining this with the identity \eqref{eq: martingale entropy} finally gives a bound on the deficit in the Shannon-Stam inequality, in the form
\begin{align} \label{eq:jump inequality}
\delta_{EPI,\lambda}(X,Y) &\geq \frac{1}{2}\int\limits_0^1\frac{\mathrm{Tr}\left(\lambda\EE\left[\left(\Gamma_t^X - \mathrm{I}_d\right)^2\right] +(1-\lambda)\EE\left[\left(\Gamma_t^Y - \mathrm{I}_d\right)^2\right] - \EE\left[\left(\tilde{\Gamma_t} - \mathrm{I}_d\right)^2\right]\right)}{1-t}dt \nonumber \\
&= \int\limits_0^1\frac{\mathrm{Tr}\left(\EE\left[\tilde{\Gamma_t}\right] - \lambda\EE\left[\Gamma^X_t\right] - (1-\lambda)\EE\left[\Gamma_t^Y\right]\right)}{1-t}dt.
\end{align}
The following technical lemma will allow us to give a lower bound for the right hand side in terms of the variances of the processes $\Gamma_t^X, \Gamma_t^Y$. Its proof is postponed to the end of the section.
\begin{lemma} \label{lem:matrices}
	Let $A$ and $B$ be positive definite matrices and denote 
	$$(A,B)_\lambda : = \lambda A+(1-\lambda)B \text{ and } (A^2,B^2)_\lambda : = \lambda A^2+(1-\lambda)B^2.$$
	Then 
	$$\mathrm{Tr}\left(\sqrt{(A^2,B^2)_\lambda} - (A,B)_\lambda\right) = \lambda(1-\lambda)\mathrm{Tr}\left(\left(A- B\right)^2\left(\sqrt{(A^2,B^2)_\lambda} + (A,B)_\lambda\right)^{-1}\right).$$
\end{lemma}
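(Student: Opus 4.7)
The plan is to introduce the shorthand $S := \sqrt{(A^2,B^2)_\lambda}$ and $M := (A,B)_\lambda$, so that the target identity reads
$$\mathrm{Tr}(S - M) = \lambda(1-\lambda)\,\mathrm{Tr}\!\left((A-B)^2 (S+M)^{-1}\right).$$
Since $A,B\succ 0$, both $S$ and $M$ are positive definite, so $S+M$ is invertible and the right-hand side makes sense.

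The first step is a direct algebraic computation of $S^2 - M^2$. Expanding,
$$S^2 - M^2 = \lambda A^2 + (1-\lambda)B^2 - \bigl(\lambda A + (1-\lambda)B\bigr)^2,$$
and the cross terms $-\lambda(1-\lambda)(AB+BA)$ combine with $\lambda(1-\lambda)A^2 + \lambda(1-\lambda)B^2$ to give exactly $\lambda(1-\lambda)(A-B)^2$. Thus
$$S^2 - M^2 = \lambda(1-\lambda)(A-B)^2.$$

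The second step circumvents the non-commutativity of $S$ and $M$. While $(S-M)(S+M) \ne S^2 - M^2$ in general, we always have the symmetrized identity
$$(S-M)(S+M) + (S+M)(S-M) = 2(S^2 - M^2) = 2\lambda(1-\lambda)(A-B)^2.$$
Multiplying both sides on the right by $(S+M)^{-1}$ yields
$$(S-M) + (S+M)(S-M)(S+M)^{-1} = 2\lambda(1-\lambda)(A-B)^2 (S+M)^{-1}.$$
Taking the trace and using the cyclic property to recognize
$$\mathrm{Tr}\!\left((S+M)(S-M)(S+M)^{-1}\right) = \mathrm{Tr}(S-M),$$
the left-hand side collapses to $2\,\mathrm{Tr}(S-M)$, and dividing by $2$ gives the claim.

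There is no serious obstacle here; the only subtlety is that one cannot simply write $S^2 - M^2 = (S-M)(S+M)$ because $S$ and $M$ do not commute, and the symmetrization above is the standard way to handle this. All other manipulations are routine: positive-definiteness of $S+M$ ensures invertibility, and the trace identity is just cyclicity.
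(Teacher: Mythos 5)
Your proof is correct and follows essentially the same route as the paper: both rest on the identity $S^2 - M^2 = \lambda(1-\lambda)(A-B)^2$ and use trace cyclicity, together with the symmetry of $S$, $M$, and $(S+M)^{-1}$, to handle the non-commutativity of $S$ and $M$. The paper expands $(S-M)(S+M)$ directly and shows the commutator remainder $SM-MS$ (and the $AB$ versus $BA$ discrepancy) has vanishing trace against $(S+M)^{-1}$, whereas your symmetrized anticommutator identity packages the same cancellation a bit more cleanly; this is a presentational difference, not a genuinely different argument.
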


Combining the lemma with the estimate obtained in \eqref{eq:jump inequality} produces the following result, which will be our main tool in studying $\delta_{EPI, \lambda}$.
\begin{lemma} \label{lem: jump bound}
	Let $X$ and $Y$ be centered random vectors on $\RR^d$ with finite second moment, and let $\Gamma_t^X, \Gamma_t^Y$ be defined as above. Then,
	\begin{align}
	&\delta_{EPI,\lambda}(X,Y)\geq \nonumber \\
	&\lambda(1-\lambda)\int\limits_{0}^1\frac{\mathrm{Tr}\left(\EE\left[\left(\Gamma_t^X - \Gamma_t^Y\right)^2\left(\sqrt{\lambda\left(\Gamma_t^X\right)^2 + (1-\lambda)\left(\Gamma_t^Y\right)^2} + \lambda\Gamma_t^X + (1-\lambda)\Gamma_t^Y\right)^{-1}\right]\right)}{1-t}dt. \label{eq:mainepibound}
	\end{align}
\end{lemma}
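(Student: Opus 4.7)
The plan is to combine the bound \eqref{eq:jump inequality}, which was already derived above via the martingale embedding and It\^o's isometry, with the algebraic identity of Lemma \ref{lem:matrices}. Since all of the stochastic-analytic work is done, the remaining argument is purely pointwise algebra plus a routine Fubini/Tonelli step.

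I would start by invoking the fact (recorded in the forthcoming Proposition \ref{prop: gamma properties}) that $\Gamma_t^X$ and $\Gamma_t^Y$ are symmetric and positive definite almost surely, so that $\tilde{\Gamma}_t = \sqrt{\lambda(\Gamma_t^X)^2 + (1-\lambda)(\Gamma_t^Y)^2}$ is well defined and the hypotheses of Lemma \ref{lem:matrices} hold pointwise in $\omega$. Then I would apply that lemma with $A = \Gamma_t^X(\omega)$ and $B = \Gamma_t^Y(\omega)$ to obtain, for almost every $\omega$,
\begin{equation*}
\mathrm{Tr}\left(\tilde{\Gamma}_t - \lambda\Gamma_t^X - (1-\lambda)\Gamma_t^Y\right) = \lambda(1-\lambda)\,\mathrm{Tr}\left(\bigl(\Gamma_t^X - \Gamma_t^Y\bigr)^2\bigl(\tilde{\Gamma}_t + \lambda\Gamma_t^X + (1-\lambda)\Gamma_t^Y\bigr)^{-1}\right).
\end{equation*}

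Substituting this identity into the integrand of \eqref{eq:jump inequality} and exchanging the trace with the expectation (both linear) yields exactly the claimed bound \eqref{eq:mainepibound}. The exchange of $\int_0^1 \frac{dt}{1-t}$ with $\EE$ is justified by Tonelli's theorem once one notes that the integrand is nonnegative: by the cyclic property of the trace,
\begin{equation*}
\mathrm{Tr}\bigl((\Gamma_t^X - \Gamma_t^Y)^2 M^{-1}\bigr) = \mathrm{Tr}\bigl((\Gamma_t^X - \Gamma_t^Y) M^{-1} (\Gamma_t^X - \Gamma_t^Y)\bigr) \geq 0,
\end{equation*}
where $M := \tilde{\Gamma}_t + \lambda\Gamma_t^X + (1-\lambda)\Gamma_t^Y$ is positive definite and the sandwiched matrix is therefore positive semidefinite.

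The step I expect to require the most care is not algebraic but rather checking that Lemma \ref{lem:matrices} is applicable almost surely, which amounts to a.s.\ positive definiteness of the $\Gamma_t$ processes. This, however, is a standing property already relied upon in the passage leading to \eqref{eq: martingale entropy}, so once it is imported the proof of Lemma \ref{lem: jump bound} reduces to a direct substitution into an inequality that has already been established.
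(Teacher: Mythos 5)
Your proof is correct and is essentially the paper's own argument: the paper likewise obtains Lemma \ref{lem: jump bound} by applying Lemma \ref{lem:matrices} pointwise to $A=\Gamma_t^X$, $B=\Gamma_t^Y$ and substituting into \eqref{eq:jump inequality}, with the almost-sure positive definiteness of the $\Gamma_t$ processes supplied by Proposition \ref{prop: gamma properties}. Your added remarks on nonnegativity and Tonelli are fine details the paper leaves implicit.
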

The expression on the right-hand side of \eqref{eq:mainepibound} may seem unwieldy, however, in many cases it can be simplified. For example, if it can be shown that, almost surely, $\Gamma_t^X, \Gamma_t^Y \preceq c_t\mathrm{I}_d$ for some deterministic $c_t > 0$, then we obtain the more tractable inequality
\begin{equation} \label{eq: jump bound}
	\delta_{EPI,\lambda}(X,Y)\geq
	\frac{\lambda(1-\lambda)}{2}\int\limits_{0}^1\frac{\mathrm{Tr}\left(\EE\left[\left(\Gamma_t^X - \Gamma_t^Y\right)^2\right]\right)}{(1-t)c_t}dt.
\end{equation}
As we will show, this is the case when the random vectors are log-concave.

\begin{proof}[Proof of Lemma \ref{lem:matrices}]
	We have
	\begin{align*}
	&\mathrm{Tr}\left(\sqrt{(A^2,B^2)_\lambda} - (A,B)_\lambda\right)\\
	&= \mathrm{Tr}\left(\left(\sqrt{(A^2,B^2)_\lambda} - (A,B)_\lambda\right)\left(\sqrt{(A^2,B^2)_\lambda} + (A,B)_\lambda\right)\left(\sqrt{(A^2,B^2)_\lambda} + (A,B)_\lambda\right)^{-1}\right).\\
	\end{align*}
	As
	\begin{align*}
	&\left(\sqrt{(A^2,B^2)_\lambda} - (A,B)_\lambda\right)\left(\sqrt{(A^2,B^2)_\lambda} + (A,B)_\lambda\right)\\ = \lambda(1-\lambda)&\left(A^2 + B^2- AB - BA\right) +\sqrt{(A^2,B^2)_\lambda}(A,B)_\lambda - (A,B)_\lambda \sqrt{(A^2,B^2)_\lambda},
	\end{align*}
	we have the equality
	\begin{align*}
	\mathrm{Tr}\left(\sqrt{(A^2,B^2)_\lambda} - (A,B)_\lambda\right) &= \lambda(1-\lambda)\mathrm{Tr}\left(\left(A^2 + B^2- \left(AB + BA\right)\right)\left(\sqrt{(A^2,B^2)_\lambda} + (A,B)_\lambda\right)^{-1}\right)\\
	&+\mathrm{Tr}\left(\sqrt{(A^2,B^2)_\lambda}(A,B)_\lambda\left(\sqrt{(A^2,B^2)_\lambda} + (A,B)_\lambda\right)^{-1}\right)\\
	&-\mathrm{Tr}\left((A,B)_\lambda\sqrt{(A^2,B^2)_\lambda}\left(\sqrt{(A^2,B^2)_\lambda} + (A,B)_\lambda\right)^{-1}\right)
	\end{align*}
	Finally, as the trace is invariant under any permutation of three symmetric matrices we have that 
	$$\mathrm{Tr}\left(AB\left(\sqrt{(A^2,B^2)_\lambda} + (A,B)_\lambda\right)^{-1}\right) = \mathrm{Tr}\left(BA\left(\sqrt{(A^2,B^2)_\lambda} + (A,B)_\lambda\right)^{-1}\right),$$
	and
	\begin{align*}
	\mathrm{Tr}\Big(\sqrt{(A^2,B^2)_\lambda}(A,B)_\lambda&\left(\sqrt{(A^2,B^2)_\lambda} + (A,B)_\lambda\right)^{-1}\Big)\\
	&=\mathrm{Tr}\left((A,B)_\lambda\sqrt{(A^2,B^2)_\lambda}\left(\sqrt{(A^2,B^2)_\lambda} + (A,B)_\lambda\right)^{-1}\right).
	\end{align*}
	Thus,
	\begin{align*}
	\mathrm{Tr}\left(\sqrt{(A^2,B^2)_\lambda} - (A,B)_\lambda\right)
	=\lambda(1-\lambda)\mathrm{Tr}\left(\left(\left(A - B\right)^2\right)\left(\sqrt{(A^2,B^2)_\lambda} + (A,B)_\lambda\right)^{-1}\right),
	\end{align*}
	as required.
\end{proof}

	\subsection{The F\"ollmer process associated to log-concave random vectors}
	In this section, we collect several results pertaining to the F\"ollmer process. Throughout the section, we fix a random vector $X$ in $\RR^n$ and associate to it the F\"ollmer process $X_t$, defined in the previous section, as well as the process $\Gamma^X_t$, defined in equation \eqref{eq:defgamma} above. The next result lists some of its basic properties, and we refer to \cite{eldan2018clt,ElDuke18} for proofs.
	\begin{proposition} \label{prop: gamma properties}
		For $t \in (0,1)$ define 
		$$f^t_X(x) := f_X(x)\exp\left(\frac{\norm{x-X_t}_2^2}{2(1-t)}\right)Z_{t,X}^{-1},$$
		where $f_X$ is the density of $X$ with respect to the standard Gaussian and $Z_{t,X}$ is a normalizing constant defined so that $\int\limits_{\RR^d} f_X^t = 1$. Then
		\begin{itemize}
			\item $f_X^t$ is the density of the random measure $\mu_t := X_1|\mathcal{F}_t$ with respect to the standard Gaussian and $\Gamma^X_t = \frac{\mathrm{Cov}\left(\mu_t\right)}{1-t}$.
			\item $\Gamma^X_t$ is almost surely a positive definite matrix, in particular, it is invertible.
			\item For all $t \in (0,1)$, we have
			\begin{equation}\label{eq:degamme}
			\frac{d}{dt}\EE\left[\Gamma^X_t\right] = \frac{\EE\left[\Gamma^X_t\right] - \EE\left[\left(\Gamma^X_t\right)^2\right]}{1-t}.
			\end{equation}
			\item The following identity holds
			\begin{equation}\label{eq:idvtgamma}
			\EE\left[v_t^X\otimes v_t^X\right] = \frac{\mathrm{I}_d - \EE\left[\Gamma^X_t\right]}{1-t} + \mathrm{Cov}(X) - \mathrm{I}_d,
			\end{equation} 
			for all $t \in [0,1]$. In particular, if $\mathrm{Cov}(X) \preceq \mathrm{I}_d$, then $\EE\left[\Gamma^X_t\right] \preceq \mathrm{I}_d$.
		\end{itemize}
		
	\end{proposition}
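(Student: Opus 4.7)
The plan is to establish all four claims by exploiting the Brownian bridge representation \eqref{eq: Brownian bridge}, according to which, conditional on $X_1 = y$, the path $(X_s)_{s \le t}$ is a Brownian bridge from $0$ to $y$; in particular $X_t \mid X_1 = y$ is Gaussian with mean $ty$ and covariance $t(1-t)\mathrm{I}_d$. For the first claim, I would compute the conditional density of $X_1$ given $\mathcal{F}_t$ by Bayes' rule: the posterior is proportional to the prior density of $X_1$ times the Brownian bridge transition density of $X_t$ given $X_1 = y$. Expanding the resulting quadratic exponent, collecting the terms involving $y$, and passing from Lebesgue density to density with respect to $\gamma$ then yields the stated formula for $f_X^t$ up to absorption of $X_t$-dependent constants into the normalization $Z_{t,X}$. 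This also shows that $\mu_t$ depends on $\mathcal{F}_t$ only through the value of $X_t$, which is the Markov property we will use below.

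The identification $\Gamma_t^X = \mathrm{Cov}(\mu_t)/(1-t)$ then follows from a differentiation-under-the-integral argument. Since $\mu_t$ depends on the path only through $X_t$, the Doob martingale $M_t := \EE[X_1 \mid \mathcal{F}_t]$ is a function of $X_t$ alone, $M_t = g(X_t)$ with $g(x) := \int y\, f_X^t(y;x)\, d\gamma(y)$, where $x$ stands for the generic value of $X_t$. Because $\nabla_x \log f_X^t(y;x)$ is affine in $y$ with slope $(1-t)^{-1}\mathrm{I}_d$, a standard score-function computation gives $\nabla g(x) = (1-t)^{-1}\mathrm{Cov}(\mu_t)$. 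Applying It\^o's formula to $g(X_t)$, the drift term must vanish because $M_t$ is a martingale, so $dM_t = \nabla g(X_t)\, dB_t^X$; comparing with \eqref{eq:defgamma} identifies $\Gamma_t^X = \mathrm{Cov}(\mu_t)/(1-t)$ almost surely. The second claim (positive definiteness) is then immediate, since $\mu_t$ is a log-concave tilt of the law of $X$ and hence itself a log-concave probability measure with non-degenerate covariance.

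For the ODE \eqref{eq:degamme}, I would combine the conditional variance decomposition $\mathrm{Cov}(X) = (1-t)\,\EE[\Gamma_t^X] + \mathrm{Cov}(M_t)$ with the It\^o computation $\frac{d}{dt}\mathrm{Cov}(M_t) = \EE[(\Gamma_t^X)^2]$, which is valid because $dM_t = \Gamma_t^X\, dB_t^X$ and $\EE[M_t] = 0$. Differentiating the decomposition in $t$ and solving for $\frac{d}{dt}\EE[\Gamma_t^X]$ produces \eqref{eq:degamme} immediately. For \eqref{eq:idvtgamma}, It\^o's isometry applied to \eqref{eq: follmer to gamma} yields
$$\frac{d}{dt}\EE\left[v_t^X \otimes v_t^X\right] = \frac{\EE\left[(\Gamma_t^X - \mathrm{I}_d)^2\right]}{(1-t)^2}.$$
A direct differentiation of the candidate right-hand side $(1-t)^{-1}(\mathrm{I}_d - \EE[\Gamma_t^X]) + \mathrm{Cov}(X) - \mathrm{I}_d$, substituting \eqref{eq:degamme} for $\frac{d}{dt}\EE[\Gamma_t^X]$, yields the same expression; since both sides vanish at $t = 0$ (using $\EE[\Gamma_0^X] = \mathrm{Cov}(X)$), the identity holds on all of $[0,1]$. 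The ``in particular'' statement then follows from the positivity of $\EE[v_t^X \otimes v_t^X]$ as a covariance matrix. The main technical obstacle, which I expect to handle via uniform Gaussian-type tail bounds on $\mu_t$ over compact sub-intervals of $(0,1)$ in the log-concave setting, is to justify the differentiation under the integral sign and the repeated applications of It\^o's formula.
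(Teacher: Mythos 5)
The paper does not prove this proposition; it only cites \cite{eldan2018clt,ElDuke18}. Judging your outline on its own merits, the overall route is correct: deriving the conditional density via Bayes' rule and the Brownian-bridge representation \eqref{eq: Brownian bridge}; identifying $\Gamma_t = \mathrm{Cov}(\mu_t)/(1-t)$ through the score-function computation $\nabla g(x) = (1-t)^{-1}\mathrm{Cov}(\mu_t)$, which works because the $X_t$-dependence of $\ln f_X^t$ is affine in $y$ with coefficient $(1-t)^{-1}$; obtaining \eqref{eq:degamme} from the conditional-variance decomposition $\mathrm{Cov}(X) = (1-t)\EE[\Gamma_t] + \mathrm{Cov}(M_t)$ together with $\frac{d}{dt}\mathrm{Cov}(M_t) = \EE[\Gamma_t^2]$; and establishing \eqref{eq:idvtgamma} by matching derivatives (via It\^o isometry applied to \eqref{eq: follmer to gamma}) and the initial condition at $t=0$, using $v_0 = 0$ for centered $X$ and $\EE[\Gamma_0] = \mathrm{Cov}(X)$. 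The ``in particular'' statement then follows by rearranging \eqref{eq:idvtgamma} to $(\mathrm{I}_d - \EE[\Gamma_t])/(1-t) = \EE[v_t\otimes v_t] + (\mathrm{I}_d - \mathrm{Cov}(X))$, whose right-hand side is a sum of two positive-semidefinite matrices when $\mathrm{Cov}(X) \preceq \mathrm{I}_d$.

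Two corrections are needed. First, your justification of the second bullet is misplaced: the proposition is stated \emph{before} the paper restricts to log-concave $X$, and ``a Gaussian tilt of the law of $X$ is a log-concave measure'' is simply false for general $X$. The correct and more elementary mechanism is that $\mu_t$ is absolutely continuous (that is the content of the first bullet), so its support cannot be contained in a proper affine subspace, and its covariance is therefore positive definite; the Gaussian factor $\exp\bigl(-\norm{y-X_t}_2^2/(2(1-t))\bigr)$ in the Lebesgue density supplies the finite-moment control and the uniform tail bounds you worry about at the end, again with no log-concavity of $X$ required. Second, the Bayes computation does not literally deliver the displayed formula: carrying it out gives a Lebesgue density of $\mu_t$ proportional to $f_X(y)\exp\bigl(-\norm{y-X_t}_2^2/(2(1-t))\bigr)$, hence a $\gamma$-relative density proportional to $f_X(y)\exp\bigl(\norm{y}_2^2/2 - \norm{y-X_t}_2^2/(2(1-t))\bigr)$. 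The expression printed in the proposition, with a positive quadratic in $y-X_t$ and no $e^{\norm{y}_2^2/2}$ factor, would not even be integrable against $\gamma$ for $t>0$ and is evidently a typo; the form in the proof of Lemma \ref{lem: bounded gamma} is the correct one. Your score-function argument and the identification of $\Gamma_t$ are unaffected, since they use only the quadratic $y$-dependence of the exponent, but you should not claim to recover the printed formula verbatim.
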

	
	In what follows, we restrict ourselves to the case that $X$ is log-concave. Using this assumption we will establish several important properties for the matrix $\Gamma_t$. For simplicity, we will write $\Gamma_t := \Gamma_t^X$ and $v_t := v_t^X$.
	The next result shows that the matrix $\Gamma_t$ is bounded almost surely.
	\begin{lemma} \label{lem: bounded gamma}
		Suppose that $X$ is log-concave, then for every $t \in (0,1)$
		$$\Gamma_t \preceq \frac{1}{t}\mathrm{I}_d.$$
		Moreover, if for some $\xi >0$, $X$ is $\xi$-uniformly log-concave then
		$$\Gamma_t \preceq \frac{1}{(1-t)\xi + t}\mathrm{I}_d.$$
	\end{lemma}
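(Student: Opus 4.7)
The plan is to derive both bounds from a Brascamp--Lieb covariance estimate applied to the conditional law $\mu_t = X_1 \mid \mathcal{F}_t$ introduced in Proposition~\ref{prop: gamma properties}. Using that $X_t \mid X_1$ is a Brownian bridge, Bayes' formula shows that the Lebesgue density $\rho_{\mu_t}$ of $\mu_t$ satisfies
\[
\ln \rho_{\mu_t}(z) \;=\; \ln \rho_X(z) \;-\; \frac{t}{2(1-t)}\bigl\|z - X_t/t\bigr\|_{2}^{2} \;+\; c(X_t,t),
\]
where $\rho_X$ is the Lebesgue density of $X$ and $c(X_t,t)$ collects all terms that do not depend on $z$. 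Differentiating twice in $z$, the $X_t$-dependence drops out of the Hessian and one obtains, pointwise in $z$ and almost surely in $X_t$,
\[
-\nabla^{2}\ln \rho_{\mu_t}(z) \;=\; -\nabla^{2}\ln \rho_X(z) \;+\; \frac{t}{1-t}\,\mathrm{I}_d.
\]

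In the log-concave case, $-\nabla^{2}\ln \rho_X \succeq 0$, so $\mu_t$ is almost surely $\tfrac{t}{1-t}$-uniformly log-concave; the Brascamp--Lieb inequality then gives $\mathrm{Cov}(\mu_t) \preceq \tfrac{1-t}{t}\,\mathrm{I}_d$, and dividing by $1-t$ yields $\Gamma_t = \mathrm{Cov}(\mu_t)/(1-t) \preceq t^{-1}\mathrm{I}_d$. In the $\xi$-uniformly log-concave case, $-\nabla^{2}\ln \rho_X \succeq \xi\,\mathrm{I}_d$ upgrades the lower bound on $-\nabla^{2}\ln \rho_{\mu_t}$ to $\tfrac{(1-t)\xi + t}{1-t}\,\mathrm{I}_d$, and the same Brascamp--Lieb step followed by division by $1-t$ produces the stated bound $\Gamma_t \preceq \bigl((1-t)\xi + t\bigr)^{-1}\mathrm{I}_d$.

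I do not anticipate any substantive obstacle here; the entire argument reduces to the Hessian identity above. The only point at which mild care is needed is the passage between densities with respect to $\gamma$ and with respect to Lebesgue, so that the contribution of the Gaussian normalization $\phi$ enters with the correct sign and coefficient $\tfrac{t}{1-t}$ when reading off the Hessian from Proposition~\ref{prop: gamma properties}.
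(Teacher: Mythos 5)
Your proof is correct and follows the same route as the paper: identify the Lebesgue density of the conditional law $\mu_t = X_1 \mid \mathcal F_t$ via Bayes, read off that its Hessian gains an extra $\tfrac{t}{1-t}\mathrm{I}_d$ of uniform log-concavity over that of $X$, and close with Brascamp--Lieb followed by the relation $\Gamma_t = \mathrm{Cov}(\mu_t)/(1-t)$. Your parametrisation in terms of the Lebesgue density $\rho_X$ and the kernel $\exp\bigl(-\tfrac{t}{2(1-t)}\lVert z - X_t/t\rVert_2^2\bigr)$ is in fact the cleaner bookkeeping; the paper works with the Gaussian-relative density $f_X$ and absorbs the quadratic discrepancy into the kernel, which obscures (and in the printed form slightly garbles) the sign conventions, though the resulting Hessian and conclusion are the same.
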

	\begin{proof}
			By Proposition \ref{prop: gamma properties}, $\mu_t$, the law of $X_1|\mathcal{F}_t$ has a density $\rho_t$, with respect to the Lebesgue measure, proportional to $$f_X(x) \exp\left(\frac{\norm{x}_2^2}{2}\right)\exp\left(-\frac{\norm{x-X_t}_2^2}{2(1-t)}\right) = f_X(x) \exp\left(\frac{\norm{x}_2^2(1-t) - \norm{x-X_t}_2^2}{2(1-t)}\right).$$
			Consequently, since $-\nabla^2f_X \succeq 0$,
			$$-\nabla^2\ln\left(\rho_t\right) = -\nabla^2f_X  - \left(1 - \frac{1}{1-t}\right)\mathrm{I}_d \succeq \frac{t}{1-t}\mathrm{I}_d.$$
			 It follows that, almost surely, $\mu_t$ is $\frac{t}{1-t}$-uniformly log-concave. According to the Brascamp-Lieb inequality (\cite{brascamp1976extensions}) $\alpha$-uniform log-concavity implies a spectral gap of $\alpha$, and in particular $\textrm{Cov}(\mu_t) \preceq \frac{1 - t}{t}\mathrm{I}_d$ and so, $\Gamma_t = \frac{\mathrm{Cov}(\mu_t)}{1-t} \preceq \frac{1}{t}\mathrm{I}_d$. If, in addition, $X$ is $\xi$-uniformly log-concave, so that $-\nabla^2f_X \succeq \xi \mathrm{I}_d$, then we may write
			 $$-\nabla^2\ln(\rho_t) \succeq \left(\xi + \frac{t}{1-t}\right)\mathrm{I}_d = \frac{(1-t)\xi +t}{(1-t)}\mathrm{I}_d  $$ 
			 and the arguments given above show
			$\textrm{Cov}(\mu_t) \preceq \frac{(1-t)}{(1-t)\xi+ t}\mathrm{I}_d$.
			Thus,
			$$\Gamma_t \preceq \frac{1}{(1-t)\xi + t}\mathrm{I}_d.$$
	\end{proof}
Our next goal is to use the formulas given in the above lemma in order to bound from below the expectation of $\Gamma_t$. We begin with a simple corollary.
\begin{corollary} \label{corr: bounded uniform log-concave}
	Suppose that $X$ is $1$-uniformly log-concave, then for every $t \in [0,1]$
	$$\EE\left[\Gamma_t\right] \succeq \mathrm{Cov}(X).$$
\end{corollary}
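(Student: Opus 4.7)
The plan is to combine the upper bound $\Gamma_t \preceq \mathrm{I}_d$ from Lemma \ref{lem: bounded gamma} (specialized to $\xi = 1$) with the evolution equation \eqref{eq:degamme} to show that $t \mapsto \EE[\Gamma_t]$ is monotone non-decreasing in the Loewner order, and then to identify its value at the left endpoint with $\mathrm{Cov}(X)$.

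For the monotonicity, I would use the elementary observation that any symmetric matrix with $0 \preceq A \preceq \mathrm{I}_d$ satisfies $A - A^2 = A^{1/2}(\mathrm{I}_d - A)A^{1/2} \succeq 0$. Applying this almost surely to $\Gamma_t$, which is positive definite by Proposition \ref{prop: gamma properties} and at most $\mathrm{I}_d$ by Lemma \ref{lem: bounded gamma}, and taking expectations, yields $\EE[\Gamma_t^2] \preceq \EE[\Gamma_t]$. Substituting into \eqref{eq:degamme} then gives
$$\frac{d}{dt}\EE[\Gamma_t] = \frac{\EE[\Gamma_t] - \EE[\Gamma_t^2]}{1-t} \succeq 0,$$
so that $\EE[\Gamma_t]$ is non-decreasing in the PSD order on $(0,1)$.

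For the initial value, I would use the formula $\Gamma_t = \mathrm{Cov}(\mu_t)/(1-t)$ from Proposition \ref{prop: gamma properties} together with the law of total variance,
$$\mathrm{Cov}(X_1) = \EE[\mathrm{Cov}(\mu_t)] + \mathrm{Cov}\bigl(\EE[X_1 \mid \mathcal{F}_t]\bigr).$$
As $t \to 0^+$ the filtration $\mathcal{F}_t$ becomes trivial, so $\EE[X_1 \mid \mathcal{F}_t]$ converges to $\EE[X_1] = 0$ in $L^2$; hence the right-most term vanishes in the limit, and dividing by $1-t \to 1$ yields $\lim_{t \to 0^+} \EE[\Gamma_t] = \mathrm{Cov}(X)$. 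Combined with the monotonicity established above, this gives $\EE[\Gamma_t] \succeq \mathrm{Cov}(X)$ for every $t \in [0,1]$.

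There is no substantial obstacle. The only step that requires a small amount of care is the boundary identification at $t = 0$, which reduces to the $L^2$ convergence of $\EE[X_1 \mid \mathcal{F}_t]$ to $0$ as the filtration becomes trivial, or equivalently to right-continuity of $\EE[\Gamma_t]$ at $t = 0$; this is immediate from the uniform bound $\Gamma_t \preceq \mathrm{I}_d$ via dominated convergence.
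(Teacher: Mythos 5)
Your proposal is correct and takes essentially the same approach as the paper: monotonicity of $\EE[\Gamma_t]$ in the Loewner order from \eqref{eq:degamme} combined with $\Gamma_t \preceq \mathrm{I}_d$ (Lemma \ref{lem: bounded gamma} at $\xi=1$), anchored at the initial value $\mathrm{Cov}(X)$. The only cosmetic differences are that you justify the endpoint as a limit $t\to 0^+$ via the law of total variance and make explicit the identity $A - A^2 = A^{1/2}(\mathrm{I}_d - A)A^{1/2}\succeq 0$, whereas the paper writes $\EE[\Gamma_0]=\mathrm{Cov}(X|\mathcal{F}_0)=\mathrm{Cov}(X)$ directly and treats the operator inequality as immediate.
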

\begin{proof}
	By \eqref{eq:degamme}, we have 
	$$\frac{d}{dt}\EE\left[\Gamma_t\right] = \frac{\EE\left[\Gamma_t\right] - \EE\left[\Gamma_t^2\right]}{1-t}.$$
	By Lemma \ref{lem: bounded gamma}, $\Gamma_t\preceq \mathrm{I}_d$, which shows
	$$\frac{d}{dt}\EE\left[\Gamma_t\right] \succeq 0.$$ 
	Thus, for every $t$,
	$$\EE\left[\Gamma_t\right] \succeq \EE\left[\Gamma_0\right] = \mathrm{Cov}(X|\mathcal{F}_0) = \mathrm{Cov}(X).$$
\end{proof}
To produce similar bounds for general log-concave random vectors, we require more intricate arguments. Recall that $\mathrm{C_p}(X)$ denotes the Poincar\'e constant of $X$. 
\begin{lemma} \label{lem: follmer poincare}
	If $X$ is centered and has a finite a Poincar\'e constant $\mathrm{C_p}(X) < \infty$, then
	$$\EE\left[v_t^{\otimes 2}\right] \preceq \left(t^2\mathrm{C_p}(X)+t(1-t)\right)\frac{d}{dt}\EE\left[v_t^{\otimes 2}\right].$$
\end{lemma}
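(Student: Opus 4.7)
The plan is to apply the Poincar\'e inequality for the law of $X_t$ to the random vector $v_t$, exploiting the fact that by the Markov property of the F\"ollmer process, $v_t$ is a deterministic function $v_t(X_t)$ of the current state alone. Since $v_t$ is a centered martingale with $v_0=0$, applying Poincar\'e coordinatewise to each scalar $u\cdot v_t$ over unit vectors $u$ yields the matrix bound
$$\EE[v_t \otimes v_t] \preceq \mathrm{C_p}(X_t)\, \EE\!\left[(\nabla v_t)(\nabla v_t)^{\top}\right],$$
where $\nabla v_t$ denotes the Jacobian of $y \mapsto v_t(y)$ at $y=X_t$. The proof then splits into two tasks: (i) identifying $\nabla v_t$ in terms of $\Gamma_t$, and (ii) bounding $\mathrm{C_p}(X_t)$ by $t^2 \mathrm{C_p}(X) + t(1-t)$.

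For task (i), the density formula in Proposition \ref{prop: gamma properties} realizes $\mu_t$ as a linear-Gaussian tilt of the law of $X_1$ with parameter $y = X_t$. Writing $Z(y)$ for its normalizing constant, one differentiation of $\log Z$ gives $\nabla_y \log Z(y) = m_t(y)/(1-t)$, where $m_t(y) := \EE[X_1 \mid X_t=y]$, and a second differentiation gives $\nabla^2_y \log Z(y) = \mathrm{Cov}_{\mu_t}(X_1)/(1-t)^2 = \Gamma_t(y)/(1-t)$. Comparing the two produces the pointwise identity $\nabla_y m_t(y) = \Gamma_t(y)$, and since $v_t(y) = (m_t(y)-y)/(1-t)$ we conclude
$$\nabla v_t(y) = \frac{\Gamma_t(y) - \mathrm{I}_d}{1-t}.$$
Because $\Gamma_t$ is symmetric, squaring converts the right-hand side of the Poincar\'e bound into $\mathrm{C_p}(X_t)\, \EE[(\Gamma_t - \mathrm{I}_d)^2]/(1-t)^2$, which by It\^o's isometry applied to the representation $v_t = \int_0^t \tfrac{\Gamma_s - \mathrm{I}_d}{1-s}\,dB_s$ is precisely $\mathrm{C_p}(X_t)\cdot \tfrac{d}{dt}\EE[v_t \otimes v_t]$.

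For task (ii), the Brownian bridge identity \eqref{eq: Brownian bridge} represents $X_t$ as the sum of the independent random vectors $tX_1$ and $\sqrt{t(1-t)}G$. The Poincar\'e constant is subadditive under independent sums, $\mathrm{C_p}(Y_1+Y_2) \leq \mathrm{C_p}(Y_1) + \mathrm{C_p}(Y_2)$, as follows by decomposing $\mathrm{Var}(f(Y_1+Y_2))$ via the conditional variance formula and applying Poincar\'e for $Y_1,Y_2$ separately. Combined with the scaling $\mathrm{C_p}(aY) = a^2 \mathrm{C_p}(Y)$ and $\mathrm{C_p}(G) = 1$, this yields $\mathrm{C_p}(X_t) \leq t^2 \mathrm{C_p}(X) + t(1-t)$. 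Substituting into the Poincar\'e estimate completes the proof.

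The main obstacle is the pointwise identity $\nabla_y m_t = \Gamma_t$ in task (i); once this is in hand the remainder is a routine combination of Poincar\'e with the covariance structure of the bridge decomposition. This identity may equivalently be recognized as Tweedie's formula for the Gaussian channel $X_t = tX_1 + \sqrt{t(1-t)}G$, in which the gradient of the posterior mean equals the posterior covariance divided by the noise variance.
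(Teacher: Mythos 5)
Your proposal is correct and follows essentially the same route as the paper: apply the Poincar\'e inequality for the law of $X_t$ to the centered function $v_t(X_t)$, bound $\mathrm{C_p}(X_t)$ by $t^2\mathrm{C_p}(X)+t(1-t)$ via the Brownian-bridge representation \eqref{eq: Brownian bridge} together with scaling and subadditivity, and identify $\EE\left[(\nabla v_t)^2\right]$ with $\frac{d}{dt}\EE\left[v_t^{\otimes 2}\right]$. The only (immaterial) difference is that you obtain $\nabla v_t = \frac{\Gamma_t-\mathrm{I}_d}{1-t}$ by differentiating the tilted density (Tweedie's formula) and then invoke \eqref{eq: follmer to gamma} with It\^o's isometry, whereas the paper reaches the same identity directly from the martingale property $dv_t(X_t)=\nabla_x v_t(X_t)\,dB_t$.
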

\begin{proof}
	Recall that, by equation \eqref{eq: Brownian bridge}, we know that $X_t$ has the same law as $tX_1 + \sqrt{t(1-t)}G$, where $G$ is a standard Gaussian independent of $X_1$.
	Since $\mathrm{C_p}(tX) = t^2\mathrm{C_p}(X) $ and since the Poincar\'e constant is sub-additive with respect to convolution (\cite{courtade2018bounds}) we get
	$$\mathrm{C_p}(X_t) \leq t^2\mathrm{C_p}(X) + t(1-t).$$
	The drift, $v_t$, is a function of $X_t$ and $\EE\left[v_t\right] = 0$. Equation \eqref{eq: follmer equation} implies that $\nabla_x v_t(X_t)$ is a symmetric matrix, hence the Poincar\'e inequality yields
	$$\EE\left[v_t^{\otimes 2}\right] \preceq \left(t^2\mathrm{C_p}(X)+t(1-t)\right)\EE\left[\nabla_x v_t(X_t)^2\right].$$
	As $v_t(X_t)$ is a martingale, by It\^o's lemma we have
	$$dv_t(X_t) = \nabla_xv_t(X_t)dB_t.$$
	An application of It\^o's isometry then shows
	$$\EE\left[\nabla_x v_t(X_t)^2\right] = \frac{d}{dt}\EE\left[v_t(X_t)^{\otimes 2}\right],$$
	where we have again used the fact that $\nabla_x v_t(X_t)$ is symmetric.
\end{proof}
Using the last lemma, we can deduce lower bounds on the matrix $\Gamma_t^X$ in terms of the Poincar\'e constant.
\begin{corollary} \label{cor: lower gamma bound}
	Suppose that $X$ is log-concave and that $\sigma^2$ is the minimal eigenvalue of $\mathrm{Cov}(X)$. Then,
	\begin{itemize}
		\item For every $t \in \left[0,\frac{1}{ 2\frac{\mathrm{C_p}(X)}{\sigma^2}+1}\right]$, 
		$\EE\left[\Gamma_t\right] \succeq \frac{\min(1,\sigma^2)}{3}\mathrm{I}_d.$
		\item For every $t \in \left[ \frac{1}{ 2\frac{\mathrm{C_p}(X)}{\sigma^2}+1}, 1\right]$, 
		$\EE\left[\Gamma_t\right] \succeq \frac{\min(1,\sigma^2)}{3}\frac{1}{t\left(2\frac{\mathrm{C_p}(X)}{\sigma^2}+1\right)}\mathrm{I}_d$.
	\end{itemize}
\end{corollary}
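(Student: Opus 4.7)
The plan is to decompose the problem into two regimes using the threshold $t_* := \frac{1}{2\mathrm{C_p}(X)/\sigma^2 + 1}$: first establish a matrix monotonicity property that reduces the second bullet to the first, and then prove the first bullet by combining Lemma \ref{lem: follmer poincare} with the upper bound $\Gamma_t \preceq \frac{1}{t} \mathrm{I}_d$ from Lemma \ref{lem: bounded gamma}. Throughout I shall use the fact that $r := \mathrm{C_p}(X)/\sigma^2 \geq 1$, which follows by testing the Poincar\'e inequality on linear functions to get $\mathrm{C_p}(X) \geq \lambda_{\max}(\mathrm{Cov}(X)) \geq \sigma^2$; in particular $t_* \leq 1/3$.

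I would begin with the monotonicity observation, in the same spirit as Corollary \ref{corr: bounded uniform log-concave}. The bound $\Gamma_t \preceq \frac{1}{t}\mathrm{I}_d$ gives $\Gamma_t^2 \preceq \Gamma_t/t$, hence $\EE[\Gamma_t^2] \preceq \EE[\Gamma_t]/t$; combined with \eqref{eq:degamme} this yields $\frac{d}{dt}\EE[\Gamma_t] \succeq -\EE[\Gamma_t]/t$, so $t\EE[\Gamma_t]$ is non-decreasing in the Loewner order. Granting the first bullet, this immediately propagates the bound to $t \geq t_*$:
$$ \EE[\Gamma_t] \succeq \frac{t_*}{t}\EE[\Gamma_{t_*}] \succeq \frac{t_*}{t}\cdot\frac{\min(1,\sigma^2)}{3}\mathrm{I}_d, $$
which is precisely the second bullet.

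For the first bullet, the strategy is to combine Lemma \ref{lem: follmer poincare} with the identity \eqref{eq:idvtgamma} to produce an algebraic inequality for $\EE[\Gamma_t]$. Applying $\Gamma_t \preceq \frac{1}{t}\mathrm{I}_d$ to the expansion $\EE[(\Gamma_t - \mathrm{I}_d)^2] = \EE[\Gamma_t^2] - 2\EE[\Gamma_t] + \mathrm{I}_d$ yields, for $t \leq 1/2$,
$$ \EE[(\Gamma_t - \mathrm{I}_d)^2] \preceq \EE[\Gamma_t]\frac{1-2t}{t} + \mathrm{I}_d. $$
Combined with the representation $\frac{d}{dt}\EE[v_t^{\otimes 2}] = \EE[(\Gamma_t - \mathrm{I}_d)^2]/(1-t)^2$ (from \eqref{eq: follmer to gamma} via It\^o's isometry) and \eqref{eq:idvtgamma}, substituting both sides into the Poincar\'e inequality of Lemma \ref{lem: follmer poincare} and rearranging yields the matrix inequality
$$ \EE[\Gamma_t]\bigl[2(1-t)^2 + t\mathrm{C_p}(X)(1-2t)\bigr] \succeq (1-t)^2 \mathrm{Cov}(X) - t^2 \mathrm{C_p}(X)\, \mathrm{I}_d. $$
The scalar factor on the left is positive for $t \leq 1/2$, and in particular for $t \leq t_* \leq 1/3$, so this inequality may be divided through.

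The last task, which I expect to be the main obstacle, is to verify that the resulting lower bound is $\succeq \frac{\min(1,\sigma^2)}{3}\mathrm{I}_d$ on $[0, t_*]$. Testing against a unit eigenvector of $\mathrm{Cov}(X)$ with eigenvalue $\sigma^2$, this reduces to a polynomial inequality in $t$ with parameters $\sigma^2$ and $r \geq 1$. At the endpoint $t = t_* = 1/(2r+1)$ the computation is explicit and the factor $1/3$ comes out tightly. I would then extend the inequality to all of $[0, t_*]$ via a case split: when $\sigma^2 \leq 1$ the relevant polynomial $g(t) = (1-t)^2 - tr[3t + \sigma^2(1-2t)]$ has $g''(t) = 2[1 - r(3 - 2\sigma^2)] \leq 0$, i.e., is concave, and so lies above its chord joining the non-negative values $g(0) = 1$ and $g(t_*) \geq 0$; when $\sigma^2 > 1$ one instead checks that the critical point of the corresponding polynomial $(3\sigma^2-2)(1-t)^2 - t\mathrm{C_p}(X)(1+t)$ lies beyond $t_*$, so that polynomial is monotone on $[0, t_*]$ and its minimum on the interval is attained at $t_*$.
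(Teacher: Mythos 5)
Your proposal is correct and follows essentially the same route as the paper: both bullets are obtained by feeding Lemma \ref{lem: follmer poincare}, identity \eqref{eq:idvtgamma} and the bound $\Gamma_t \preceq \frac{1}{t}\mathrm{I}_d$ into the same algebraic rearrangement (your matrix inequality is exactly the paper's, after multiplying through by $(1-t)^2$), and the second bullet is the same Gronwall-type propagation of the endpoint bound. If anything, your endpoint analysis is slightly more careful than the paper's, which only evaluates the rational function at $t = t_*$ without explicitly justifying that this is where its minimum over $[0,t_*]$ occurs.
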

\begin{proof}
	Using Equation \eqref{eq: follmer to gamma}, It\^o's isometry and the fact that $\Gamma_t$ is symmetric, we deduce that
	$$\frac{d}{dt}\EE\left[v_t^{\otimes 2}\right] = \EE\left[\left(\frac{\Gamma_t - \mathrm{I}_d}{1-t}\right)^2\right],$$
	Combining this with equation \eqref{eq:idvtgamma} and using Lemma \ref{lem: follmer poincare}, we get
	\begin{equation} \label{eq: poincare inequality}
	\mathrm{Cov}(X) - \mathrm{I}_d + \frac{\mathrm{I}_d - \EE\left[\Gamma_t\right]}{1-t} \preceq \left(t^2\mathrm{C_p}(X)+t(1-t)\right)\frac{\EE\left[\Gamma_t^2\right]-2\EE\left[\Gamma_t\right]+\mathrm{I}_d}{(1-t)^2}.
	\end{equation}
	In the case where $X$ is log-concave, by Lemma \ref{lem: bounded gamma}, $\Gamma_t \preceq\frac{1}{t}\mathrm{I}_d$ almost surely, therefore $\EE\left[\Gamma_t^2\right] \preceq \frac{1}{t}\EE\left[\Gamma_t\right]$. The above inequality then becomes
	\begin{align*}
	(1-t)^2\left(\sigma^2 - 1\right)\mathrm{I}_d &+ (1-t)(\mathrm{I}_d - \EE\left[\Gamma_t\right])\\ 
	&\preceq \left(t\mathrm{C_p}(X)+(1-t)\right)\EE\left[\Gamma_t\right]+\left(t^2\mathrm{C_p}(X)+t(1-t)\right)\left(\mathrm{I}_d-2\EE\left[\Gamma_t\right]\right).
	\end{align*}
	Rearranging the inequality shows 
	$$\frac{\sigma^2 - 2t\sigma^2 - \mathrm{C_p}(X)t^2\ + t^2\sigma^2}{2-4t-2\mathrm{C_p}(X)t^2 + \mathrm{C_p}(X)t +2t^2}\mathrm{I}_d \preceq \EE\left[\Gamma_t\right].$$
	As long as $t \leq \frac{1}{ 2\left(\frac{\mathrm{C_p}(X)}{\sigma^2}\right)+1}$, we have 
	\begin{align*}
	&\text{if } \sigma^2\geq 1, \ \ \ \ \frac{1}{3}\mathrm{I}_d\preceq\frac{\sigma^2\left(4\mathrm{C_p}(X) - \sigma^2\right)}{2\mathrm{C_p}(X)(\sigma^2 + 4)-\sigma^4}\mathrm{I}_d \preceq \EE\left[\Gamma_t\right],\\
	&\text{if } \sigma^2< 1, \ \ \ \frac{\sigma^2}{3}\mathrm{I}_d\preceq\frac{\sigma^2\left(4\mathrm{C_p}(X) - \sigma^2\right)}{2\mathrm{C_p}(X)(\sigma^2 + 4)-\sigma^4}\mathrm{I}_d \preceq \EE\left[\Gamma_t\right],
	\end{align*}
	which gives the first bound.
	By \eqref{eq:defgamma}, we also have the bound
	$$\frac{d}{dt}\EE\left[\Gamma_t\right] = \frac{\EE\left[\Gamma_t\right] - \EE\left[\Gamma_t^2\right]}{1-t}\succeq\frac{1 - \frac{1}{t}}{1-t}\EE\left[\Gamma_t\right] = -\frac{1}{t}\EE\left[\Gamma_t\right].$$
	The differential equation 
	$$g'(t) = -\frac{g(t)}{t}, g\left(\frac{1}{ 2\frac{\mathrm{C_p}(X)}{\sigma^2}+1}\right) = \frac{\min(1,\sigma^2)}{3}$$
	has a unique solution given by
	$$g(t) = \frac{\min(1,\sigma^2)}{3}\frac{1}{ t\left(2 \frac{\mathrm{C_p}(X)}{\sigma^2}+1\right)}.$$
	Using Gromwall's inequality, we conclude that for every $t \in \left[\frac{1}{ 2\frac{\mathrm{C_p}(X)}{\sigma^2}+1},1\right]$,
	$$\EE\left[\Gamma_t\right] \succeq \frac{\min(1,\sigma^2)}{3}\frac{1}{ t\left(2\frac{\mathrm{C_p}(X)}{\sigma^2}+1\right)}\mathrm{I}_d.$$
\end{proof}
We conclude this section with a comparison lemma that will allow to control the values of $\EE\left[\norm{v_t}_2^2\right]$.
	\begin{lemma} \label{lem: gromwall poincare}
		Let $t_0 \in [0,1]$ and suppose that $X$ is centered with a finite Poincar\'e constant $\mathrm{C_p}(X) < \infty$. Then
		\begin{itemize}
			\item For $t_0 \leq t \leq 1,$
			$$\EE\left[\norm{v_t}_2^2\right] \geq  \EE\left[\norm{v_{t_0}}_2^2\right]\frac{t_0\left(\mathrm{C_p}(X)-1\right)t + t}{t_0\left(\mathrm{C_p}(X)-1\right)t + t_0}.$$
			\item For $0 \leq t \leq t_0,$
			$$\EE\left[\norm{v_t}_2^2\right] \leq  \EE\left[\norm{v_{t_0}}_2^2\right]\frac{t_0\left(\mathrm{C_p}(X)-1\right)t + t}{t_0\left(\mathrm{C_p}(X)-1\right)t + t_0}.$$
		\end{itemize}
	\end{lemma}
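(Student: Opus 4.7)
The plan is to reduce the stated estimates to a one–dimensional Gronwall/ODE comparison applied to the scalar quantity $\phi(t) := \EE[\|v_t\|_2^2] = \mathrm{Tr}(\EE[v_t^{\otimes 2}])$. The starting point is Lemma \ref{lem: follmer poincare}, which gives the matrix inequality
\[
\EE[v_t^{\otimes 2}] \preceq \bigl(t^2\mathrm{C_p}(X) + t(1-t)\bigr)\frac{d}{dt}\EE[v_t^{\otimes 2}].
\]
Since both sides are symmetric matrices and the trace is monotone with respect to the PSD order, taking traces yields the scalar differential inequality
\[
\phi(t) \leq \bigl(t^2\mathrm{C_p}(X) + t(1-t)\bigr)\,\phi'(t) = t\bigl((\mathrm{C_p}(X)-1)t + 1\bigr)\phi'(t).
\]

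Assuming $\phi(t)>0$ on $(0,1)$, this can be rewritten as
\[
\frac{\phi'(t)}{\phi(t)} \geq \frac{1}{t\bigl((\mathrm{C_p}(X)-1)t + 1\bigr)}.
\]
Using partial fractions
\[
\frac{1}{t((\mathrm{C_p}(X)-1)t+1)} = \frac{1}{t} - \frac{\mathrm{C_p}(X)-1}{(\mathrm{C_p}(X)-1)t+1},
\]
an antiderivative of the right-hand side is $\ln\!\bigl(t/((\mathrm{C_p}(X)-1)t+1)\bigr)$. Integrating the inequality from $t_0$ to $t$ (for $t\geq t_0$) and exponentiating therefore gives
\[
\frac{\phi(t)}{\phi(t_0)} \geq \frac{t\bigl((\mathrm{C_p}(X)-1)t_0+1\bigr)}{t_0\bigl((\mathrm{C_p}(X)-1)t+1\bigr)} = \frac{t_0(\mathrm{C_p}(X)-1)t + t}{t_0(\mathrm{C_p}(X)-1)t + t_0},
\]
which is precisely the first inequality. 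The second inequality follows from the same computation carried out from $t$ up to $t_0$ (for $t\leq t_0$), yielding the reverse ratio comparison with the same right-hand side.

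The only mildly delicate points I anticipate are (i) justifying that $\phi$ is differentiable and strictly positive on $(0,1)$ so that the logarithmic manipulation is legitimate, which can be handled using the martingale representation $v_t = \int_0^t\frac{\Gamma^X_s-\mathrm{I}_d}{1-s}dB_s^X$ and It\^o's isometry (as already used in the proof of Corollary \ref{cor: lower gamma bound}), and (ii) passing from the PSD matrix inequality of Lemma \ref{lem: follmer poincare} to the trace inequality, which is immediate. No new ideas beyond Lemma \ref{lem: follmer poincare} are needed; the main point is the explicit integration of the separable ODE obtained after taking the trace.
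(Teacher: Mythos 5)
Your proof is correct and follows essentially the same route as the paper: both reduce the matrix inequality of Lemma \ref{lem: follmer poincare} to the scalar differential inequality $\phi(t)\le t\left((\mathrm{C_p}(X)-1)t+1\right)\phi'(t)$ and conclude by Gronwall comparison with the explicit solution of the associated ODE. The only difference is that you carry out the separable integration explicitly via partial fractions, whereas the paper simply exhibits the ODE solution and cites Gronwall; the positivity/differentiability caveat you flag is equally implicit in the paper's argument.
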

	\begin{proof}
		Consider the differential equation
		$$g(t) = \left(\mathrm{C_p}(X)t^2 + t(1-t)\right)g'(t) \text{ with initial condition } g(t_0) =  \EE\left[\norm{v_{t_0}}_2^2\right].$$
		It has a unique solution given by
		$$g(t) = \EE\left[\norm{v_{t_0}}_2^2\right]\frac{t_0\left(\mathrm{C_p}(X)-1\right)t + t}{t_0\left(\mathrm{C_p}(X)-1\right)t + t_0}.$$
		The bounds follow by applying Gromwall's inequality combined with the result of Lemma \ref{lem: follmer poincare}.
	\end{proof}

\section{Stability for $1$-uniformly log-concave random vectors} \label{sec: uniformly log-concave}
In this section, we assume that $X$ and $Y$ are both $1$-uniformly log-concave. Let $B_t^X, B_t^Y$ be independent standard Brownian motions and consider the associated processes $\Gamma_t^X, \Gamma_t^Y$ defined as in Section \ref{sec: approach}.

The key fact that makes the uniform log-concave case easier is Lemma \ref{lem: bounded gamma}, which implies that $\Gamma_t^X,\Gamma_t^Y \preceq \mathrm{I}_d$ almost surely. In this case,  Lemma \ref{lem: jump bound} simplifies to
\begin{equation} \label{eq: jump as vars}
\delta_{EPI, \lambda}(X,Y) \geq \frac{\lambda(1-\lambda)}{2}\int\limits_{0}^1\left(\frac{\mathrm{Tr}\left(\mathrm{Var}(\Gamma_t^X)\right)}{1-t} + \frac{\mathrm{Tr}\left(\mathrm{Var}(\Gamma_t^Y)\right)}{1-t} + \frac{\mathrm{Tr}\left(\left(\EE\left[\Gamma_t^X\right] - \EE\left[\Gamma_t^Y\right]\right)^2\right)}{1-t}\right)dt,
\end{equation}
where we have used the fact that
$$\mathrm{Tr}\left(\EE\left[\left(\Gamma_t^X - \Gamma_t^Y\right)^2\right]\right) = \mathrm{Tr}\left(\EE\left[\left(\Gamma_t^X - \EE\left[\Gamma_t^X\right]\right)^2\right] + \EE\left[\left(\Gamma_t^Y - \EE\left[\Gamma_t^Y\right]\right)^2\right] + \left(\EE\left[\Gamma_t^X\right] - \EE\left[\Gamma_t^Y\right]\right)^2\right).$$
Consider the two Gaussian random vectors defined as
$$G_X = \int\limits_{0}^1\EE\left[\Gamma_t^X\right]dB_t^X \text{ and } G_Y = \int\limits_0^1\EE\left[\Gamma_t^Y\right]dB_t^Y,$$
and observe that
$$X = \int\limits_{0}^1\Gamma_t^XdB_t^X = \int\limits_0^1\left(\Gamma_t^X - \EE\left[\Gamma_t^X\right]\right)dB_t^X + \int\limits_{0}^1\EE\left[\Gamma_t^X\right]dB_t^X =\int\limits_0^1\left(\Gamma_t^X - \EE\left[\Gamma_t^X\right]\right)dB_t^X +G_X.$$
This induces a coupling between $X$ and $G_X$ from which we obtain, using It\^o's Isometry,
$$\mathcal{W}_2^2 \left(X, G_X\right) \leq \EE\left[\left(\int\limits_0^1\left(\Gamma_t^X - \EE\left[\Gamma_t^X\right]\right)dB_t^X\right)^2 \right] = \int\limits_0^1\mathrm{Tr}\left(\mathrm{Var}\left(\Gamma_t^X\right)\right)dt,$$
and an analogous estimate also holds for $Y$. We may now use $\EE\left[\Gamma_t^X\right]$ and $\EE\left[\Gamma_t^Y\right]$ as the diffusion coefficients for the same Brownian motion to establish
$$\mathcal{W}^2_2(G_X, G_Y) \leq \EE\left[\left(\int\limits_{0}^1\left(\EE\left[\Gamma_t^X\right]-\EE\left[\Gamma_t^Y\right]\right)dB_t\right)^2\right]=\int\limits_0^1\mathrm{Tr}\left(\left(\EE\left[\Gamma_t^X\right]- \EE\left[\Gamma_t^Y\right]\right)^2\right)dt.$$ 
Plugging these estimates into \eqref{eq: jump as vars} reproves the following bound, which is identical to Theorem 1 in \cite{courtade2016wasserstein}.
\begin{theorem}
	Let $X$ and $Y$ be $1$-uniformly log-concave centered vectors and let $G_X, G_Y$ be defined as above. Then,
\begin{align*}
		\delta_{EPI, \lambda}(X,Y) \geq
		\frac{\lambda(1-\lambda)}{2}\left(\mathcal{W}_2^2\left(X||G_X\right) +\mathcal{W}_2^2\left(Y||G_Y\right)+ \mathcal{W}_2^2\left(G_X,G_Y\right)\right).
		\end{align*}
\end{theorem}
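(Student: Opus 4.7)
The theorem is essentially assembled from the chain of estimates already built up in the discussion preceding it; the proof is just a matter of stitching them together cleanly. The plan is as follows.

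First, I would invoke Lemma \ref{lem: bounded gamma} applied with $\xi = 1$ to conclude that under the 1-uniform log-concavity hypothesis, $\Gamma_t^X, \Gamma_t^Y \preceq \mathrm{I}_d$ almost surely. This makes the ``almost surely bounded'' hypothesis of equation \eqref{eq: jump bound} hold with deterministic constant $c_t = 1$, and so Lemma \ref{lem: jump bound} reduces to
\[
\delta_{EPI,\lambda}(X,Y) \;\geq\; \frac{\lambda(1-\lambda)}{2}\int_{0}^{1}\frac{\mathrm{Tr}\bigl(\EE[(\Gamma_t^X-\Gamma_t^Y)^2]\bigr)}{1-t}\,dt.
\]
Next, using the independence of $\Gamma_t^X$ and $\Gamma_t^Y$, I would expand the matrix square as a bias–variance decomposition to arrive at the displayed inequality \eqref{eq: jump as vars}, in which the integrand breaks into three nonnegative pieces: $\mathrm{Tr}(\mathrm{Var}(\Gamma_t^X))$, $\mathrm{Tr}(\mathrm{Var}(\Gamma_t^Y))$, and $\mathrm{Tr}\bigl((\EE[\Gamma_t^X]-\EE[\Gamma_t^Y])^2\bigr)$, all divided by $1-t$.

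The second main ingredient is converting each of these three pieces into a Wasserstein bound via an explicit coupling. For the bound on $\mathcal{W}_2^2(X,G_X)$, I would write
\[
X \stackrel{d}{=} \int_0^1 \Gamma_t^X\,dB_t^X \;=\; G_X + \int_0^1\bigl(\Gamma_t^X-\EE[\Gamma_t^X]\bigr)\,dB_t^X,
\]
which exhibits $X$ and $G_X$ on the same probability space; It\^o's isometry then immediately yields $\mathcal{W}_2^2(X,G_X)\leq \int_0^1\mathrm{Tr}(\mathrm{Var}(\Gamma_t^X))\,dt$, and symmetrically for $(Y,G_Y)$. For $\mathcal{W}_2^2(G_X,G_Y)$, I would couple the two Gaussians by driving them with a \emph{common} Brownian motion and the deterministic diffusion coefficients $\EE[\Gamma_t^X], \EE[\Gamma_t^Y]$; It\^o's isometry again gives $\mathcal{W}_2^2(G_X,G_Y)\leq \int_0^1\mathrm{Tr}\bigl((\EE[\Gamma_t^X]-\EE[\Gamma_t^Y])^2\bigr)\,dt$.

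Finally, I would combine the two ingredients by noting that $\tfrac{1}{1-t}\geq 1$ on $[0,1)$, so each of the three integrals appearing in \eqref{eq: jump as vars} dominates the corresponding integral without the $1/(1-t)$ factor, and hence dominates the corresponding Wasserstein term. Summing the three contributions yields exactly the claimed inequality. There is no real obstacle here: every step has been prepared in Section \ref{sec: approach} and in Lemma \ref{lem: bounded gamma}, and the only substantive piece of new work is verifying that the three couplings above are valid and that It\^o's isometry applies (straightforward since $\EE[\Gamma_t^X]$ and $\EE[\Gamma_t^Y]$ are deterministic and $\Gamma_t^X, \Gamma_t^Y$ are adapted with integrable squared norms by Lemma \ref{lem: bounded gamma}).
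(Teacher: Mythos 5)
Your proposal is correct and follows essentially the same route as the paper: reduce via Lemma \ref{lem: bounded gamma} and Lemma \ref{lem: jump bound} to the bias--variance form \eqref{eq: jump as vars}, exhibit $G_X, G_Y$ and the three couplings via It\^o's isometry, and compare. The only (minor) difference is that you make explicit the observation $\frac{1}{1-t}\geq 1$, which the paper leaves implicit when "plugging in."
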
 
To obtain a bound for the relative entropy towards the proof of Theorem \ref{thm: stability for uniform}, we will require a slightly more general version of inequality \eqref{eq: entropy bound}. This is the content of the next lemma, whose proof is similar to the argument presented above. The main difference comes from applying Girsanov's theorem to a re-scaled Brownian motion, from which we obtain an expression analogous to \eqref{eq: entropy energy}. The reader is referred to \cite[Lemma 2]{eldan2018clt}, for a complete proof.
\begin{lemma} \label{lem: entropy bound}
	Let $F_t$ and $E_t$ be two $F_t$-adapted matrix-valued processes and let $X_t$, $M_t$ be two processes defined by
	$$Z_t = \int\limits_{0}^tF_sdB_s, \text{ and } M_t = \int\limits_{0}^tE_sdB_s.$$
	Suppose that for every $t\in[0,1]$, $E_t \succeq c\mathrm{I}_d$ for some deterministic $c > 0$, then
	$$\mathrm{D}(Z_1||M_1) \leq \mathrm{Tr}\int\limits_0^1\frac{\EE\left[\left(F_t - E_t\right)^2\right]}{c^2(1-t)}dt.$$
\end{lemma}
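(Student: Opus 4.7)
The result generalizes \eqref{eq: entropy bound}, which is the special case $E_t = \mathrm{I}_d$, $M_t = B_t$. The plan is to mimic the two-step derivation leading to \eqref{eq: entropy bound}: first convert the identity $Z_1 - M_1 = \int_0^1 (F_s - E_s)\,dB_s$ into a drift representation of $Z_1$ on top of $M_1$ via stochastic Fubini, and then apply a generalized Girsanov bound for the non-standard reference martingale $M_t$.

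For the first step, exactly as in \eqref{eq: mart to drift}, I would write
$$Z_1 - M_1 \;=\; \int_0^1 (F_s - E_s)\,dB_s \;=\; \int_0^1 \left( \int_0^t \frac{F_s - E_s}{1-s}\,dB_s \right) dt \;=:\; \int_0^1 a_t\,dt,$$
so that $Z_1 = M_1 + \int_0^1 a_t\,dt$ with $a_t$ adapted to $\mathcal{F}_t$. This part is purely algebraic and mirrors the original argument.

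For the second step, the key input is a F\"ollmer-type variational principle adapted to the reference martingale $M_t$, whose quadratic variation is $\int_0^t E_s E_s^T\,ds$ rather than $t\,\mathrm{I}_d$: for any adapted drift $b_t$,
$$\mathrm{D}\Big(\mathrm{law}\Big(M_1 + \int_0^1 b_s\,ds\Big) \;\Big|\Big|\; \mathrm{law}(M_1)\Big) \;\leq\; \tfrac{1}{2}\, \EE \int_0^1 \langle b_s,\, (E_s E_s^T)^{-1} b_s \rangle\,ds.$$
This is the Girsanov change of measure on Wiener space, written for a martingale with non-identity quadratic variation; the effective drift with respect to the underlying Brownian motion is $E_s^{-1} b_s$, and squaring it produces the $(E_s E_s^T)^{-1}$-weighting of the drift energy. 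Applying this to $b_s = a_s$, using $(E_s E_s^T)^{-1} \preceq c^{-2}\mathrm{I}_d$ from the hypothesis, and computing $\EE\int_0^1 \|a_s\|^2\,ds$ by It\^o isometry and Fubini exactly as in \eqref{eq: entropy bound} gives
$$\mathrm{D}(Z_1\,||\,M_1) \;\leq\; \frac{1}{2c^2}\int_0^1 \frac{\mathrm{Tr}\,\EE\big[(F_t - E_t)^2\big]}{1-t}\,dt,$$
which is in fact slightly sharper than the stated inequality.

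The main obstacle is the generalized Girsanov step: one must correctly track how the non-trivial diffusion coefficient $E_s$ of the reference martingale re-weights the drift energy, leading to the weight $(E_s E_s^T)^{-1}$ rather than the identity. Once this rescaling is in place --- as carried out in \cite[Lemma 2]{eldan2018clt} --- the remainder is a verbatim repetition of the stochastic Fubini / It\^o isometry calculation already used to derive \eqref{eq: entropy bound}, together with the crude bound $E_s^{-2}\preceq c^{-2}\mathrm{I}_d$ to remove the matrix weighting.
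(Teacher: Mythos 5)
Your argument is correct and is essentially the proof the paper has in mind: the paper itself only sketches this lemma and defers to \cite[Lemma 2]{eldan2018clt}, and that sketch is precisely your two steps --- stochastic Fubini to realize $Z_1$ as $M_1$ plus an adapted drift, followed by Girsanov for the re-scaled Brownian motion and the same It\^o-isometry/Fubini computation as in \eqref{eq: entropy bound}. Your version even keeps the factor $\tfrac12$ (consistent with the special case $E_t=\mathrm{I}_d$), which the stated lemma discards; the one caveat is that the data-processing step in your generalized Girsanov bound is only straightforward when $E_t$ is deterministic, which is the case in every application in the paper.
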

\begin{proof}[Proof of Theorem \ref{thm: stability for uniform}]
By Corollary \ref{corr: bounded uniform log-concave}
$$\EE\left[\Gamma_t^X\right] \succeq \sigma_X\mathrm{I}_d \text{ and } \EE\left[\Gamma_t^Y\right] \succeq \sigma_Y\mathrm{I}_d\text{ for every } t\in [0,1].$$ 
We invoke Lemma \ref{lem: entropy bound} with $E_t = \EE\left[\Gamma_t^X\right]$ and $F_t = \Gamma_t^X$ to obtain
$$\sigma_X^2\mathrm{D}(X||G_X)\leq \int\limits_{0}^1\frac{\mathrm{Tr}\left(\mathrm{Var}\left(\Gamma_t^X\right)\right)}{1-t}dt.$$
Repeating the same argument for $Y$ gives
$$\sigma_Y^2\mathrm{D}(Y||G_Y)\leq \int\limits_{0}^1\frac{\mathrm{Tr}\left(\mathrm{Var}\left(\Gamma_t^Y\right)\right)}{1-t}dt.$$
By invoking Lemma \ref{lem: entropy bound} with $F_t = \EE\left[\Gamma_t^X\right]$ and $E_t = \EE\left[\Gamma_t^Y\right]$ and then one more time after switching between $F_t$ and $E_t$, and summing the results, we get
$$\frac{\sigma_Y^2}{2}\mathrm{D}(G_X||G_Y) + \frac{\sigma_X^2}{2}\mathrm{D}(G_Y||G_X)\leq \int\limits_{0}^1\frac{\mathrm{Tr}\left(\left(\EE\left[\Gamma_t^X\right] - \EE\left[\Gamma_t^Y\right]\right)^2\right)}{1-t}dt.$$
Plugging the above inequalities into \eqref{eq: jump as vars} concludes the proof.
\end{proof}
\section{Stability for general log-concave random vectors} \label{sec: log-concave}
Fix $X, Y$, centered log-concave random vectors in $\RR^d$, such that \begin{equation} \label{eq: assumption}
\mathrm{Cov}(Y) + \mathrm{Cov}(X) = 2\mathrm{I}_d,
\end{equation} with $\sigma_X^2,\sigma_Y^2$  the corresponding minimal eigenvalues of $\mathrm{Cov}(X)$ and $\mathrm{Cov}(Y)$. Assume further that $\frac{\mathrm{C_p}(Y)}{\sigma_Y^2},\frac{\mathrm{C_p}(X)}{\sigma_X^2} \leq \mathrm{C_p}$, for some $\mathrm{C_p} >1$. Again, let $B_t^X$ and $B_t^Y$ be independent Brownian motions and consider the associated processes $\Gamma_t^X, \Gamma_t^Y$ defined as in Section \ref{sec: approach}.

The general log-concave case, in comparison with the case where $X$ and $Y$ are uniformly log-concave, gives rise to two essential difficulties. Recall that the results in the previous section used the fact that an upper bound for the matrices $\Gamma_t^X,\Gamma_t^Y$, combined with equation \eqref{eq:mainepibound} gives the simpler bound \eqref{eq: jump as vars}. Unfortunately, in the general log-concave case, there is no upper bound uniform in $t$, which creates the first problem. The second issue has to do with the lack of respective lower bounds for $\EE[\Gamma_t^X]$ and $\EE[\Gamma_t^Y]$: in view of Lemma \ref{lem: entropy bound}, one needs such bounds in order to obtain estimates on the entropies.

The solution of the second issue lies in Corollary \ref{cor: lower gamma bound}, which gives a lower bound for the processes in terms on the Poincar\'e constants. We denote $\xi = \frac{1}{(2\mathrm{C_p}+1)}\frac{\min(\sigma_Y^2,\sigma_X^2)}{3}$, so that the corollary gives
\begin{equation} \label{eq: general lower bound}
\EE\left[\Gamma_t^Y\right],\EE\left[\Gamma_t^X\right] \succeq \xi\mathrm{I}_d.
\end{equation}

Thus, we are left with the issue arising from the lack of a uniform upper bound for the matrices $\Gamma_t^X,\Gamma_t^Y$. Note that Lemma \ref{lem: bounded gamma} gives $\Gamma_t^X\preceq \frac{1}{t}\mathrm{I}_d$, a bound which is not uniform in $t$. To illustrate how one may overcome this issue, suppose that there exists an $\eps>0$, such that
$$\int\limits_0^\eps\frac{\mathrm{Tr}\left(\EE\left[\left(\Gamma_t^X - \Gamma_t^Y\right)^2\right]\right)}{(1-t)}dt < \frac{1}{2} \int\limits_0^1\frac{\mathrm{Tr}\left(\EE\left[\left(\Gamma_t^X - \Gamma_t^Y\right)^2\right]\right)}{(1-t)}dt.$$
In such a case, Lemma \ref{lem: jump bound} would imply
$$\delta_{EPI, \lambda}(X,Y) \gtrsim \frac{\lambda(1-\lambda)}{\eps}\mathrm{Tr}\int\limits_{0}^1\frac{\EE\left[\left(\Gamma_t^X-\Gamma_t^Y\right)^2\right]}{1-t}dt.$$
Towards finding an $\eps$ such that the above holds, note that since $v_t^X$ is a martingale, and using \eqref{eq: entropy energy} we have for every $t_0 \in [0,1],$
\begin{equation} \label{eq: martingale bound}
(1-t_0)\mathrm{D}\left(X||G\right) =\frac{1-t_0}{2}\int\limits_{0}^1\EE\left[\norm{v_t^X}_2^2\right]dt \leq \frac{1}{2}\int\limits_{t_0}^1\EE\left[\norm{v_t^X}_2^2\right]dt \leq \mathrm{D}\left(X||G\right).
\end{equation}
Observe that
$$\mathrm{Tr}\left(\EE\left[\left(\Gamma_t^X - \Gamma_t^Y\right)^2\right]\right) = \mathrm{Tr}\left(\EE\left[\left(\Gamma_t^X - \mathrm{I}_d\right)^2\right]+ \EE\left[\left(\Gamma_t^Y - \mathrm{I}_d\right)^2\right] -2\EE\left[\mathrm{I}_d - \Gamma_t^X\right]\EE\left[\mathrm{I}_d - \Gamma_t^Y\right]\right).$$
Using the relation in \eqref{eq: follmer to gamma}, Fubini's theorem shows
\begin{align*}
\int\limits_{t_0}^{1}\EE\left[\norm{v_t^X}_2^2\right]dt &= \mathrm{Tr}\int\limits_{t_0}^{1}\int\limits_{0}^t \frac{\EE\left[\left(\Gamma_s^X-\mathrm{I}_d\right)^2\right]}{(1-s)^2}dsdt \nonumber\\
&= \mathrm{Tr}\int\limits_{0}^{t_0}\int\limits_{t_0}^{1}\frac{\EE\left[\left(\Gamma_s^X-\mathrm{I}_d\right)^2\right]}{(1-s)^2}dtds + \mathrm{Tr}\int\limits_{t_0}^{1}\int\limits_{s}^1\frac{\EE\left[\left(\Gamma_s^X-\mathrm{I}_d\right)^2\right]}{(1-s)^2}dtds\nonumber\\
&=  (1-t_0)\EE\left[\norm{v_{t_0}^X}_2^2\right] + \mathrm{Tr}\int\limits_{t_0}^{1}\frac{\EE\left[\left(\Gamma_s^X-\mathrm{I}_d\right)^2\right]}{1-s}ds.
\end{align*}
Combining the last two displays gives
\begin{align}\label{eq: truncation}
\mathrm{Tr}\int\limits_{t_0}^1\frac{\EE\left[\left(\Gamma_t^X - \Gamma_t^Y\right)^2\right]}{1-t}dt = \int\limits_{t_0}^1\bigg(\EE\left[\norm{v_t^X}_2^2\right]&+ \EE\left[\norm{v_t^Y}_2^2\right]\bigg)dt - (1-t_0)\left(\EE\left[\norm{v_{t_0}^X}_2^2\right] +\EE\left[\norm{v_{t_0}^Y}_2^2\right]\right)\nonumber\\
&-2\mathrm{Tr}\int\limits_{t_0}^1\frac{\EE\left[\mathrm{I}_d - \Gamma_t^X\right]\EE\left[\mathrm{I}_d - \Gamma_t^Y\right]}{1-t}dt.
\end{align} 
Using \eqref{eq:idvtgamma}, we have the identities:
$$\frac{\EE\left[\mathrm{I}_d-\Gamma_t^X\right]}{1-t} = \EE\left[v^X_t\otimes v^X_t\right] + \mathrm{I}_d-\mathrm{Cov}(X)$$
and
$$\frac{\EE\left[\mathrm{I}_d-\Gamma_t^Y\right]}{1-t} = \EE\left[v^Y_t\otimes v^Y_t\right] + \mathrm{I}_d-\mathrm{Cov}(Y),$$
from which we deduce
\begin{align*}
2\frac{\EE\left[\mathrm{I}_d-\Gamma_t^X\right]\EE\left[\mathrm{I}_d-\Gamma_t^Y\right]}{1-t} =& \left(\mathrm{I}_d- \EE\left[\Gamma_t^Y\right]\right)\EE\left[v^X_t\otimes v^X_t\right] + \left(\mathrm{I}_d- \EE\left[\Gamma_t^X\right]\right)\EE\left[v^Y_t\otimes v^Y_t\right]\\
&+\left(\mathrm{I}_d- \EE\left[\Gamma_t^Y\right]\right)\left(\mathrm{I}_d - \mathrm{Cov}(X)\right) + \left(\mathrm{I}_d- \EE\left[\Gamma_t^X\right]\right)\left(\mathrm{I}_d - \mathrm{Cov}(Y)\right).
\end{align*}
Let $\{w_i\}_{i=1}^d$ be an orthornormal basis of eigenvectors corresponding to the eigenvalues $\{\lambda_i\}_{i=1}^d$ of $\mathrm{I}_d-\EE\left[\Gamma_t^X\right]$. The following observation, which follows from the above identities, is crucial: if $\lambda_i \leq 0$ then necessarily $\langle w_i, \mathrm{Cov}(X)w_i\rangle \geq 1$. In this case, by assumption \eqref{eq: assumption}, $\langle w_i, \mathrm{Cov}(Y)w_i\rangle \leq 1$ and 
$$\left  \langle w_i,\frac{\EE\left[\mathrm{I}_d-\Gamma_t^X\right]\EE\left[\mathrm{I}_d-\Gamma_t^Y\right]}{1-t} w_i \right  \rangle\leq 0. $$
Our aim is to bound \eqref{eq: truncation} from below; thus, in the calculation of the trace in the RHS, we may disregard all $w_i$ corresponding to negative $\lambda_i$. Moreover, if $\lambda_i \geq 0$, we need only consider the cases where 
$$\langle w_i,\left(\mathrm{I}_d- \EE\left[\Gamma_t^Y\right]\right)w_i\rangle \geq 0,$$
as well.
Since,
\begin{align*}
2 \left \langle w_i,\frac{\EE\left[\mathrm{I}_d-\Gamma_t^X\right]\EE\left[\mathrm{I}_d-\Gamma_t^Y\right]}{1-t} w_i \right \rangle =& \langle w_i,\EE\left[\mathrm{I}_d-\Gamma_t^X\right] w_i\rangle\left(\EE\left[\langle v_t^Y,w_i\rangle^2\right] + 1 - \langle w_i, \mathrm{Cov}(Y)w_i\rangle\right)\\
&+\langle w_i,\EE\left[\mathrm{I}_d-\Gamma_t^Y\right] w_i\rangle\left(\EE\left[\langle v_t^X,w_i\rangle^2\right] + 1 - \langle w_i, \mathrm{Cov}(X)w_i\rangle\right),
\end{align*}
under the assumptions taken on $w_i$, we see that all the terms are positive. Using the estimate \eqref{eq: general lower bound}, the previous equation is bounded from above by
\begin{align*}
(1-\xi)\big(&\EE\left[\langle v_t^Y,w_i\rangle^2\right] + 1 - \langle w_i, \mathrm{Cov}(Y)w_i\rangle + \EE\left[\langle v_t^X,w_i\rangle^2\right] + 1 - \langle w_i, \mathrm{Cov}(X)w_i\rangle\big) \\
&=(1-\xi)\big(\EE\left[\langle v_t^Y,w_i\rangle^2\right] + \EE\left[\langle v_t^X,w_i\rangle^2\right]\big),
\end{align*}
where we have used \eqref{eq: assumption}. Summing over all the relevant $w_i$ we get
$$2\mathrm{Tr}\frac{\EE\left[\mathrm{I}_d-\Gamma_t^X\right]\EE\left[\mathrm{I}_d-\Gamma_t^Y\right]}{1-t} \leq (1-\xi)\left(\EE\left[\norm{v^X_t}_2^2\right] + \EE\left[\norm{v^Y_t}_2^2\right]\right).$$
Plugging this into \eqref{eq: truncation} and using \eqref{eq: martingale bound} we have thus shown
\begin{align} \label{eq:remainder bound}
\mathrm{Tr}\int\limits_{t_0}^1\frac{\EE\left[\left(\Gamma_t^X - \Gamma_t^Y\right)^2\right]}{1-t}dt\geq 2\xi(1&-t_0) \left(\mathrm{D}(X||G) + \mathrm{D}(Y||G)\right) \nonumber\\
&- (1-t_0)\left(\EE\left[\norm{v_{t_0}^X}_2^2\right] +\EE\left[\norm{v_{t_0}^Y}_2^2\right]\right).
\end{align}
This suggests that it may be useful to bound $\EE\left[\norm{v^X_{t_0}}_2^2\right]$ from above, for small values of $t_0$, which is the objective of the next lemma.
\begin{lemma} \label{lem: smallv_t}
	If $X$ is centered and has a finite Poincar\'e constant $\mathrm{C_p}(X) < \infty$, then for every $s \leq \frac{1}{3(2\mathrm{C_p}(X)+1)}$ the following holds
	$$\EE\left[\norm{v^X_{s^2}}^2_2\right] < \frac{s}{4}\cdot \mathrm{D}(X||G).$$ 
\end{lemma}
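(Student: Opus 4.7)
Set $g(t) := \EE\bigl[\norm{v_t^X}_2^2\bigr]$. The first move is to rewrite the target estimate using the entropy-energy identity \eqref{eq: entropy energy}, namely $\mathrm{D}(X\|G) = \tfrac{1}{2}\int_0^1 g(t)\, dt$; the desired conclusion is then equivalent to
$$g(s^2) < \frac{s}{8}\int_0^1 g(t)\, dt.$$
Because $v_t^X$ is a continuous martingale with $v_0^X = 0$, the function $g$ is non-decreasing with $g(0) = 0$; in particular $\int_0^1 g(t)\, dt \geq \int_{s^2}^1 g(t)\, dt$, so it will suffice to compare $g(s^2)$ with the truncated integral.

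Write $a := C_p(X) - 1$. The main tool is Lemma \ref{lem: gromwall poincare}: applying it with $t_0 = s^2$ in its ``$t \geq t_0$'' direction gives the pointwise lower bound
$$g(t) \geq g(s^2) \cdot \frac{t(1+as^2)}{s^2(1+at)}, \qquad t \in [s^2, 1].$$
Integrating this over $[s^2, 1]$ and then invoking $\int_0^1 g \geq \int_{s^2}^1 g$ yields
$$g(s^2) \leq \frac{s^2}{(1+as^2)\, I(s)} \int_0^1 g(t)\, dt, \qquad I(s) := \int_{s^2}^1 \frac{t\, dt}{1+at}.$$
Consequently, the lemma reduces to a purely scalar inequality: $(1+as^2)\, I(s) > 8s$.

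The integral $I(s)$ admits a closed form, $I(s) = \tfrac{1-s^2}{a} - \tfrac{1}{a^2}\ln\tfrac{1+a}{1+as^2}$ (with the natural limit $\tfrac{1-s^4}{2}$ as $a \to 0$). The remaining step is a calculus verification under the hypothesis $s \leq \tfrac{1}{3(2C_p(X)+1)} = \tfrac{1}{3(2a+3)}$, and I expect this constant chase to be the main technical obstacle. The naive bound $1+at \leq C_p(X)$ gives only $(1+as^2) I(s) \gtrsim s^2 / C_p(X)$, which is not strong enough; one must instead retain the leading asymptotic $I(s) \sim 1/a$ for large $a$, for example by combining $\ln(1+a) \leq a$ with a second-order Taylor control of $\ln(1+as^2)$. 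The hypothesis forces $s$ (and therefore $as^2$) to be small enough that the factor $1/C_p(X)$ from $I(s)$ is absorbed by the smallness $s = O(1/C_p(X))$, delivering the desired inequality with the stated constant $1/4$.
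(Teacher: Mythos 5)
Your setup and reduction are correct, and they follow essentially the same route as the paper (the paper runs it as a proof by contradiction, you run the contrapositive): apply Lemma \ref{lem: gromwall poincare} at $t_0=s^2$, integrate over $[s^2,1]$, and compare with the total energy $\int_0^1 g(t)\,dt = 2\mathrm{D}(X||G)$. The genuine gap is in the step you defer. The scalar inequality $(1+as^2)\,I(s)>8s$ is \emph{false} at the endpoint of the allowed range of $s$, for every value of $a=\mathrm{C_p}(X)-1\geq 0$. The cleanest instance is $a=0$, where the endpoint is $s=\tfrac19$ and $I(s)=\tfrac{1-s^4}{2}<\tfrac12$, while $8s=\tfrac89$. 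For general $a$, writing $s_*=\tfrac{1}{3(2a+3)}$, a direct computation shows that the quantity $\tfrac{2s_*}{(1+as_*^2)I(s_*)}$ (which your argument needs to be below $\tfrac14$) decreases from roughly $\tfrac49$ at $a=0$ to $\tfrac13$ as $a\to\infty$, and in particular never drops below $\tfrac13$. So the constant chase does not deliver $\tfrac14$: what this route actually proves is $\EE[\norm{v^X_{s^2}}_2^2]<\tfrac{s}{2}\,\mathrm{D}(X||G)$ (indeed about $\tfrac{4s}{9}\,\mathrm{D}(X||G)$), not the stated $\tfrac{s}{4}\,\mathrm{D}(X||G)$. (A side remark: your claim that the naive bound $1+at\leq \mathrm{C_p}(X)$ gives only $(1+as^2)I(s)\gtrsim s^2/\mathrm{C_p}(X)$ is off; it gives $\gtrsim 1/\mathrm{C_p}(X)$, which is still not enough at the endpoint since $8s_*\approx \tfrac{4}{3}\cdot\tfrac{1}{\mathrm{C_p}(X)}$.)

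You are in good company: the paper's own proof derives $\int_{s^2}^1\EE[\norm{v^X_t}_2^2]\,dt>\mathrm{D}(X||G)$ and declares this to contradict \eqref{eq: entropy energy}, but that identity gives $\int_0^1\EE[\norm{v^X_t}_2^2]\,dt=2\mathrm{D}(X||G)$, so no contradiction follows; a contradiction would require the integral to exceed $2\mathrm{D}(X||G)$, which is exactly the inequality $(1+as^2)I(s)>8s$ that fails. Nothing is lost by your discarding of $\int_0^{s^2}g$, since $g$ is increasing and that piece is $O(s^2 g(s^2))$; the loss is intrinsic to the method. The statement is salvageable with the constant $\tfrac12$ in place of $\tfrac14$ (your reduction gives this immediately, since $\tfrac12>\tfrac49$), and that weaker constant still suffices for the only downstream use of the lemma, namely the derivation of \eqref{eq: partial variance}. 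But as proposed, the remaining ``calculus verification'' cannot be completed, so the proof is incomplete as written.
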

\begin{proof}
	Suppose to the contrary that $\EE\left[\norm{v_{s^2}^X}^2_2\right] \geq \frac{s}{4}\cdot \mathrm{D}(X||G)$. 
	Invoking Lemma \ref{lem: gromwall poincare} with $t_0 = s^2$ gives
	$$\EE\left[\norm{v_t^X}_2^2\right] \geq \mathrm{D}(X||G)\cdot \frac{t\left((\mathrm{C_p}(X) -1)s^2 + 1\right)}{4\left((\mathrm{C_p}(X) -1)st + s\right)},$$
	whenever $t \geq s^2$. 
	Thus,
	\begin{align} \label{eq: gronwall bound}
	\int\limits_{s^2}^1\EE\left[\norm{v_t^X}_2^2\right]dt &\geq \mathrm{D}(X||G)\int\limits_{s^2}^1\frac{t\left((\mathrm{C_p}(X) -1)s^2 + 1\right)}{4\left((\mathrm{C_p}(X) -1)st + s\right)}dt \nonumber\\
	&= \mathrm{D}(X||G)\left((\mathrm{C_p}(X)-1)s^2+1\right)\frac{(\mathrm{C_p}(X)-1)t-\ln\left(t\left(\mathrm{C_p}(X)-1\right) + 1\right)}{4(\mathrm{C_p}(X)-1)^2s}\Bigg\vert_{s^2}^1.
	\end{align}
	Note now that for $s\leq \frac{1}{3(2\mathrm{C_p}(X)+1)}$
	$$\frac{d}{ds} \frac{t\left((\mathrm{C_p}(X) -1)s^2 + 1\right)}{4\left((\mathrm{C_p}(X) -1)st + s\right)} = \frac{\left(\mathrm{C_p}(X)-1\right)s^2t-1}{s^2((\mathrm{C_p}(X)-1)t+1)}<0,$$
	and in particular we may substitute $s = \frac{1}{3(2\mathrm{C_p}(X)+1)}$ in \eqref{eq: gronwall bound}. 
	In this case, a straightforward calculation yields
	$$\int\limits_{\xi^2_X}^1\EE\left[\norm{v_t^X}_2^2\right]dt > \mathrm{D}(X||G),$$
	which contradicts the identity \eqref{eq: entropy energy}, and concludes the proof by contradiction.
\end{proof}
We would like to use the lemma with the choice $s = \xi^2$. In order to verify the condition on the lemma which amounts to $\xi^2 \leq \frac{1}{3(2\mathrm{C_p}(X)+1)}$, we first remark that if $\sigma_X^2 \leq 1$, then it is clear that $\xi \leq \frac{1}{3(2\mathrm{C_p}(X)+1)}$. Otherwise, $\sigma_X^2 \geq 1$ and 
$$\xi \leq \frac{1}{2\frac{\mathrm{C_p}(X)}{\sigma_X^2}+1}\frac{\sigma_Y^2}{3} \leq \frac{1}{2\frac{\mathrm{C_p}(X)}{\sigma_X^2}+1}\frac{2 - \sigma_X^2}{3} \leq\frac{1}{3(2\mathrm{C_p}(X)+1)}.$$
As the same reasoning is also true for $Y$, we now choose $t_0 = \xi^2$, which allows to invoke the previous lemma in \eqref{eq:remainder bound} and to establish:
\begin{equation}\label{eq: partial variance}
\mathrm{Tr}\int\limits_{\xi^2}^1\frac{\EE\left[\left(\Gamma_t^X - \Gamma_t^Y\right)^2\right]}{1-t}dt \geq \xi\left(\mathrm{D}(X||G) +\mathrm{D}(Y||G)\right). 
\end{equation}
We are finally ready to prove the main theorem.
\begin{proof}[Proof of Theorem \ref{thm: EPI for general log-concave}]
	
	Denote $\xi = \frac{1}{(2\mathrm{C_p}+1)}\frac{\min(\sigma_Y^2,\sigma_X^2)}{3}$. 
	Since $X$ and $Y$ are log-concave, by Lemma \ref{lem: bounded gamma}, $\Gamma_t^X, \Gamma_t^Y \preceq \frac{1}{t}\mathrm{I}_d$ almost surely. Thus, Lemma \ref{lem: jump bound} gives
	\begin{align*}
	\delta_{EPI, \lambda}(X,Y) \geq \frac{\xi^2\lambda(1-\lambda)}{2}\int\limits_{\xi^2}^1 \frac{\mathrm{Tr}\left(\EE\left[(\Gamma_t^X - \Gamma_t^Y)^2\right]\right)}{1-t}dt.
	\end{align*}
	By noting that $\mathrm{C_p} \geq 1$, the bound \eqref{eq: partial variance} gives
	\begin{align*}
	\delta_{EPI, \lambda}(X,Y)&\geq \frac{\xi^3\lambda(1-\lambda)}{2}\left(\mathrm{D}\left(X||G\right) +\mathrm{D}\left(Y||G\right)\right)\\
	&\geq K\lambda(1-\lambda)\left(\frac{\min(\sigma_Y^2,\sigma_X^2)}{\mathrm{C_p}}\right)^3\left(\mathrm{D}\left(X||G\right) + \mathrm{D}\left(Y||G\right) 
	\right),
	\end{align*}
	for some numerical constant $K>0$.
\end{proof}
\section{Further results} \label{sec: extensions}
\subsection{Stability for low entropy log concave measures}
In this section we focus on the case where $X$ and $Y$ are log-concave and isotropic. Similar to the previous section, we set $\xi_X = \frac{1}{3(2\mathrm{C_p}(X) + 1)}$, so that by Corollary \ref{cor: lower gamma bound},
$$\EE\left[\Gamma_t^X\right] \succeq \xi_X\mathrm{I}_d.$$  Towards the proof of Theorem \ref{thm: low entropy}, we first need an analogue of Lemma \ref{lem: smallv_t}, for which we sketch the proof here.
\begin{lemma} \label{lem: smaller v_t}
	If $X$ is centred and has a finite Poincar\'e constant $\mathrm{C_p}(X) < \infty$,
	$$\EE\left[\norm{v_{\xi_X}}^2_2\right] < \frac{1}{4}\mathrm{D}(X||G).$$ 
\end{lemma}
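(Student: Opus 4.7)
The plan is to adapt the contradiction argument used for Lemma~\ref{lem: smallv_t}, now starting the Gronwall propagation at $t_0 = \xi_X$ rather than at $t_0 = \xi_X^{2}$ and with the constant $1/4$ in place of $\xi_X/4$. First I would suppose, for contradiction, that $\EE\left[\norm{v_{\xi_X}^X}_2^2\right] \geq \tfrac{1}{4}\mathrm{D}(X||G)$. Applying the first bullet of Lemma~\ref{lem: gromwall poincare} with $t_0 = \xi_X$ propagates this lower bound forward: for every $t \in [\xi_X,1]$,
$$
\EE\left[\norm{v_t^X}_2^2\right] \geq \frac{\mathrm{D}(X||G)}{4}\cdot \frac{t\bigl((\mathrm{C_p}(X)-1)\xi_X + 1\bigr)}{\xi_X\bigl((\mathrm{C_p}(X)-1)t + 1\bigr)}.
$$

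Next I would integrate this inequality over $t \in [\xi_X, 1]$, using the explicit antiderivative
$$
\int \frac{t}{(\mathrm{C_p}(X)-1)t + 1}\,dt = \frac{(\mathrm{C_p}(X)-1)t - \ln\bigl((\mathrm{C_p}(X)-1)t + 1\bigr)}{(\mathrm{C_p}(X)-1)^{2}}
$$
(interpreted by continuity as $t^{2}/2$ in the limit $\mathrm{C_p}(X) \to 1$). This yields a closed-form lower bound on $\int_{\xi_X}^{1}\EE\left[\norm{v_t^X}_2^2\right]dt$ depending only on $\mathrm{C_p}(X)$. Substituting $\xi_X = \tfrac{1}{3(2\mathrm{C_p}(X)+1)}$, the plan is to verify that this bound strictly exceeds the total $\int_0^{1}\EE\left[\norm{v_t^X}_2^2\right]dt$ allowed by \eqref{eq: entropy energy}, producing the desired contradiction. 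As in the proof of Lemma~\ref{lem: smallv_t}, a short monotonicity check of the integrand in the parameter $\xi_X$ should allow one to reduce to checking this single endpoint.

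The main obstacle is this concluding numerical step. The resulting expression in $\mathrm{C_p}(X)$ is somewhat unwieldy, and one has to handle separately the limit $\mathrm{C_p}(X) \to 1^{+}$ (where L'H\^opital's rule gives a cleaner bound, of the shape $(1-\xi_X^{2})/(8\xi_X)$) and the large-$\mathrm{C_p}(X)$ asymptotic regime. However, the algebra is entirely parallel to that already performed for Lemma~\ref{lem: smallv_t}; the only substantive change is that the improvement factor coming from the larger bound $1/4$ (as opposed to $\xi_X/4$) must compensate for the shorter interval of integration $[\xi_X,1]$ (as opposed to $[\xi_X^{2},1]$), which is exactly what makes the bound at $t_0=\xi_X$ rather than $t_0=\xi_X^{2}$ the natural analogue.
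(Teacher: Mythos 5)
Your strategy matches the paper's proof: assume the contrary, propagate the lower bound forward via Lemma~\ref{lem: gromwall poincare} starting at $t_0=\xi_X$, integrate over $[\xi_X,1]$, and aim to contradict \eqref{eq: entropy energy}. The propagated inequality you write down and the antiderivative are both correct. However, the concluding numerical step --- which you rightly identify as the crux and defer --- does not actually close. Identity \eqref{eq: entropy energy} gives $\int_0^1 \EE\left[\norm{v_t^X}_2^2\right]dt = 2\mathrm{D}(X||G)$, so to obtain a contradiction the lower bound on $\int_{\xi_X}^1 \EE\left[\norm{v_t^X}_2^2\right]dt$ must exceed $2\mathrm{D}(X||G)$, not $\mathrm{D}(X||G)$. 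Under the contradiction hypothesis $\EE\left[\norm{v_{\xi_X}}_2^2\right]\geq \tfrac14\mathrm{D}(X||G)$, the Gronwall lower bound integrates to at most about $\tfrac74\,\mathrm{D}(X||G)$ uniformly in $\mathrm{C_p}(X)$: in your own $\mathrm{C_p}(X)\to 1^{+}$ computation the bound is $\tfrac{1-\xi_X^2}{8\xi_X}\,\mathrm{D}(X||G)$, which with $\xi_X=\tfrac19$ equals $\tfrac{10}{9}\mathrm{D}(X||G)$, well short of $2\mathrm{D}(X||G)$; and as $\mathrm{C_p}(X)\to\infty$ one finds $\xi_X(\mathrm{C_p}(X)-1)\to\tfrac16$, so the propagation factor tends to $7$ pointwise and the integral bound tends to $\tfrac74\mathrm{D}(X||G)$, still below $2\mathrm{D}(X||G)$.

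So with the stated threshold $\tfrac14$, the contradiction does not materialize and the argument as written has a genuine gap. (A quick check shows that a threshold of $\tfrac12\,\mathrm{D}(X||G)$ does make the same contradiction go through for all $\mathrm{C_p}(X)\geq 1$, so a weaker version of the lemma is recoverable with this exact scheme, at the cost of a worse constant downstream in Theorem~\ref{thm: low entropy}.) For what it is worth, the paper's own one-line justification asserts only $\int_{\xi_X}^1 \EE\left[\norm{v_t^X}_2^2\right]dt \geq \mathrm{D}(X||G)$ and calls that a contradiction to \eqref{eq: entropy energy}; since the identity says the full integral on $[0,1]$ equals $2\mathrm{D}(X||G)$, that statement alone is not a contradiction either, so the paper appears to share the same numerical gap.
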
 
\begin{proof}
	Assume by contradiction that $\EE\left[\norm{v_{\xi_X}}_2^2\right] \geq \frac{1}{4}\mathrm{D}(X||G)$. In this case, Lemma \ref{lem: gromwall poincare} implies, for every $t \geq \xi_X$,
	$$\EE\left[\norm{v_t^X}_2^2\right] \geq \mathrm{D}(X||G)\cdot \frac{t\left((\mathrm{C_p}(X) -1)\xi_X + 1\right)}{4\left((\mathrm{C_p}(X) -1)\xi_Xt + \xi_X\right)}.$$
	A calculation then shows that
	$$\int\limits_{\xi_X}^1\EE\left[\norm{v^X_t}_2^2\right]dt \geq \mathrm{D}(X||G),$$
	which is a contradiction to \eqref{eq: entropy energy}.
\end{proof}

\begin{proof} [Proof of Theorem \ref{thm: low entropy}]
	Since $v_t^X$ is a martingale, $\EE\left[\norm{v_t^X}_2^2\right]$ is an increasing function. By \eqref{eq: entropy energy} we deduce the elementary inequality
	$$\EE\left[\norm{v_s^X}_2^2\right] \leq \frac{1}{1-s}\int\limits_0^1\EE\left[\norm{v_t^X}^2_2\right]dt = \frac{2\mathrm{D}(X||G)}{1-s},$$
	which holds for every $s \in[0,1]$. For isotropic $X$, Equation \eqref{eq:idvtgamma} shows that, for all $t \in [0,1]$,
	$$(1-t)\EE\left[\norm{v^X_t}_2^2\right]=\mathrm{Tr}\left(\mathrm{I}_d - \EE\left[\Gamma^X_t\right]\right) \leq 2\mathrm{D}(X||G) \leq \frac{1}{2},$$
	where the second inequality is by assumption. Note that Equation \eqref{eq:idvtgamma} also shows that $\EE\left[\Gamma_t^X\right] \preceq \mathrm{I}_d$ which yields, for every $t \in[0,1]$
	$$0 \preceq\mathrm{I}_d - \EE\left[\Gamma^X_t\right] \preceq \frac{1}{2}\mathrm{I}_d.$$
	Applying this to $Y$ as well produces the bound 
	\begin{align*}
	2\mathrm{Tr}\frac{\EE\left[\mathrm{I}_d-\Gamma_t^X\right]\EE\left[\mathrm{I}_d-\Gamma_t^Y\right]}{1-t} &\leq\frac{1}{2}\mathrm{Tr}\left(\frac{\EE\left[\mathrm{I}_d-\Gamma_t^Y\right]}{1-t}\right) + \frac{1}{2}\mathrm{Tr}\left(\frac{\EE\left[\mathrm{I}_d-\Gamma_t^X\right]}{1-t}\right)\\
	&= \frac{1}{2}\left(\EE\left[\norm{v^X_t}_2^2\right]+\EE\left[\norm{v^Y_t}_2^2\right]\right).
	\end{align*}
	Set $\xi = \min(\xi_X,\xi_Y)$. Repeating the same calculation as in \eqref{eq: truncation} and using the above gives that
	\begin{align*} 
	\mathrm{Tr}\int\limits_{\xi}^1\frac{\EE\left[\left(\Gamma_t^X - \Gamma_t^Y\right)^2\right]}{1-t}dt\geq (1&-\xi) 	\left(\mathrm{D}(X||G) + \mathrm{D}(Y||G)\right) \nonumber\\
	&- (1-\xi)\left(\EE\left[\norm{v_{\xi}^X}_2^2\right] 	+\EE\left[\norm{v_{\xi}^Y}_2^2\right]\right).
	\end{align*}
	Lemma \ref{lem: smaller v_t} implies
	$$\mathrm{Tr}\int\limits_{\xi}^1\frac{\EE\left[\left(\Gamma_t^X - \Gamma_t^Y\right)^2\right]}{1-t}dt \geq \frac{3}{4}(1-\xi)\left(\mathrm{D}(X||G) + \mathrm{D}(Y||G)\right) \geq \frac{1}{2}\left(\mathrm{D}(X||G) + \mathrm{D}(Y||G)\right).$$
	Finally, by Lemma \ref{lem: bounded gamma}, $\Gamma_t^X,\Gamma_t^Y \preceq \frac{1}{t}\mathrm{I}_d$ almost surely for all $t \in [0,1]$. We now invoke Lemma \ref{lem: jump bound} to obtain
	\begin{align*}
	\delta_{EPI, \lambda}(X,Y) &\geq\frac{\lambda(1-\lambda)}{2\xi}\mathrm{Tr}\int\limits_{\xi}^1\frac{\EE\left[\left(\Gamma_t^X - \Gamma_t^Y\right)^2\right]}{1-t}dt\\
	&\geq \frac{\lambda(1-\lambda)}{4\xi}\left(\mathrm{D}(X||G) + \mathrm{D}(Y||G)\right).
	\end{align*}
\end{proof}
\subsection{Stability under convolution with a Gaussian}
\begin{proof}[Proof of Theorem \ref{thm: gaussian conv}]
	Fix $\lambda \in (0,1)$, by \eqref{eq: Brownian bridge} we have that 
	$$\sqrt{\lambda}\left(\sqrt{\lambda}X_1 + \sqrt{1-\lambda}G\right) \stackrel{d}{=} B_\lambda + \int\limits_0^\lambda v_t^Xdt.$$
	As the relative entropy is affine invariant, this implies
	\begin{equation}\label{eq:entcomb}
	\mathrm{D}\left(\sqrt{\lambda}\left(\sqrt{\lambda}X_1 + \sqrt{1-\lambda}G\right)\Big|\Big|\sqrt{\lambda} G\right) = \mathrm{D}\left(\sqrt{\lambda}X_1 + \sqrt{1-\lambda}G\Big|\Big|G\right) = \frac{1}{2}\int\limits_0^\lambda\EE\left[\norm{v^X_t}_2^2\right]dt.
	\end{equation}
	Lemma \ref{lem: gromwall poincare} yields,
	$$\EE\left[\norm{v_t^X}_2^2\right] \geq \EE\left[\norm{v_\lambda^X}_2^2\right]\frac{\lambda\left(\mathrm{C_p}(X)-1\right)t + t}{\lambda\left(\mathrm{C_p}(X)-1\right)t + \lambda} \text{ for } t \geq \lambda,$$
	and
	$$\EE\left[\norm{v_t^X}_2^2\right] \leq \EE\left[\norm{v_\lambda^X}_2^2\right]\frac{\lambda\left(\mathrm{C_p}(X)-1\right)t + t}{\lambda\left(\mathrm{C_p}(X)-1\right)t + \lambda} \text{ for } t \leq \lambda.$$
	Denote
	$$I_1 := \int\limits_{\lambda}^1\frac{\lambda\left(\mathrm{C_p}(X)-1\right)t + t}{\lambda\left(\mathrm{C_p}(X)-1\right)t + \lambda}dt \text{ and } I_2 := \int\limits_{0}^\lambda\frac{\lambda\left(\mathrm{C_p}(X)-1\right)t + t}{\lambda\left(\mathrm{C_p}(X)-1\right)t + \lambda}dt.$$
	A calculation shows
	$$I_1 = \frac{\left(\lambda\left(\mathrm{C_p}(X) - 1\right) + 1\right)\left((1-\lambda)\left(\mathrm{C_p}(X) - 1\right) -\ln\left(\mathrm{C_p}(X)\right) + \ln\left(\lambda\left(\mathrm{C_p}(X)-1\right)+1\right)\right)}{\lambda\left(\mathrm{C_p}(X) - 1\right)^2},$$
	as well as
	$$I_2 =  \frac{\left(\lambda\left(\mathrm{C_p}(X) - 1\right) + 1\right)\left(\lambda(\mathrm{C_p}(X) - 1) - \ln\left(\lambda\left(\mathrm{C_p}(X)-1\right)+1\right)\right)}{\lambda\left(\mathrm{C_p}(X) - 1\right)^2}.$$
	Thus, the above bounds give
	\begin{align*}
	\mathrm{D}(X||G)&=\frac{1}{2}\int\limits_{0}^1\EE\left[\norm{v_t^X}_2^2\right]dt \geq \frac{1}{2}\int\limits_{0}^\lambda\EE\left[\norm{v_t^X}_2^2\right]dt + \frac{\EE\left[\norm{v_\lambda^X}_2^2\right]}{2}I_1,
	\end{align*}
	and
	$$0 \leq \frac{1}{2}\int\limits_{0}^\lambda\EE\left[\norm{v_t^X}_2^2\right]dt \leq \frac{1}{2}I_2.$$
	Now, since the expression $\frac{\alpha}{\alpha + \beta}$ is monotone increasing with respect to $\alpha$ and decreasing with respect to $\beta$ whenever $\alpha,\beta > 0$, those two inequalities together with \eqref{eq:entcomb} imply that
	\begin{align*}
	\mathrm{D}\left(\sqrt{\lambda}X + \sqrt{1-\lambda}G\Big|\Big|G\right) ~& \leq \frac{I_2}{I_1+I_2} \mathrm{D}(X||G) \\
	& =  \frac{\lambda\left(\mathrm{C_p}(X) - 1\right) - \ln\left(\lambda\left(\mathrm{C_p}(X)-1\right)+1\right)}{\mathrm{C_p}(X)  -\ln\left(\mathrm{C_p}(X)\right)- 1}\mathrm{D}(X||G).
	\end{align*}
	Rewriting the above in terms of the deficit in the Shannon-Stam inequality, we have established
	\begin{align*}
	\delta_{EPI, \lambda}(X,G) &= \lambda\mathrm{D}(X||G) - \mathrm{D}\left(\sqrt{\lambda}X + \sqrt{1-\lambda}G\Big|\Big|G\right) \\
	&\geq \left(\lambda - \frac{\lambda\left(\mathrm{C_p}(X) - 1\right) - \ln\left(\lambda\left(\mathrm{C_p}(X)-1\right)+1\right)}{\mathrm{C_p}(X)  -\ln\left(\mathrm{C_p}(X)\right)- 1}\right)\mathrm{D}(X||G).
	\end{align*}
\end{proof}

\bibliographystyle{acm}
\bibliography{bib}
\end{document}